\setlist{nolistsep}
\newcommand*{\rom}
[1]{\expandafter\@slowromancap\romannumeral #1@}
\DeclareMathOperator*{\argmin}{arg\min}
\newtheorem{thm}{Theorem}
\newtheorem{lemma}{Lemma}
\newtheorem{ass}{Assumption}
\newcommand{\Ex}{\mathbb{E}}
\newcommand{\fx}{\mathsf{f}}
\newcommand{\AverageN}{\frac{1}{N}\sum_{i=1}^{N}}
\newcommand{\AverageT}{\frac{1}{T}\sum_{t=1}^{T}}
\newcommand{\AverageNT}{\frac{1}{NT}\sum_{i=1}^{N}\sum_{t=1}^{T}}
\begin{document}

\title{\textbf{A Simple Estimator For Quantile Panel Data Models Using Smoothed Quantile Regressions}}
\author[1]{Liang Chen\thanks{Email: chen.liang@mail.shufe.edu.cn. Financial support from the National Natural Science Foundation
of China (Grant No. 71703089) is gratefully acknowledged.}}
\author[1]{Yulong Huo}
\affil[1]{School of Economics, Shanghai University of Finance and Economics}

\clearpage\maketitle
\thispagestyle{empty}

\begin{abstract}
Canay (2011)'s two-step estimator of quantile panel data models, due to its simple intuition and low computational cost, has been widely used in empirical studies in recent years. In this paper, we revisit the estimator of Canay (2011) and point out that in his asymptotic analysis the bias of his estimator due to the estimation of the fixed effects is mistakenly omitted, and that such omission will lead to invalid inference on the coefficients. To solve this problem, we propose a similar easy-to-implement estimator based on smoothed quantile regressions. The asymptotic distribution of the new estimator is established and the analytical expression of its asymptotic bias is derived. Based on these results, we show how to make asymptotically valid inference based on both analytical and split-panel jackknife bias corrections. Finally, finite sample simulations are used to support our theoretical analysis and to illustrate the importance of bias correction in quantile regressions for panel data.

\vspace{0.5cm}
\noindent\textbf{Keywords}: Panel data, quantile regression, bias correction, jackknife.

\noindent\textbf{JEL codes}: C31, C33, C38.
\end{abstract}

\newpage

\section{Introduction}

Starting with \cite{koenker2004quantile}, the last decade has seen a rapid growth of the literature on quantile regressions for panel data models.  \cite{abrevaya2008effects}, \cite{rosen2012set}, \cite{arellano2016nonlinear} and \cite{graham2018quantile} considered the identification and estimation of quantile effects with fixed $T$\footnote{Throughout the paper we use $N$ and $T$ to denote the number of cross-sectional ovbservations and the number of time-series observations respectively.}; \cite{lamarche2010robust} and \cite{galvao2010penalized} proposed penalized quantile regressions for panel data models with large $T$; \cite{galvao2011quantile} considered quantile regressions of dynamic panels with large $T$; \cite{kato2012asymptotics} and \cite{galvao2016smoothed} focused on the asymptotic distributions of quantile regressions and smoothed quantile regressions; \cite{galvao2013estimation} studied censored quantile regressions for panel data; quantile panel models with interactive fixed effects were considered by \cite{harding2014estimating}, and more recently by \cite{ando2019quantile} and \cite{chen2019twostep}.

Among these methods, the estimation approach of \cite{canay2011simple} is one of the most widely used in empirical studies. According to \cite{10.1093/ectj/utz012}, \cite{canay2011simple} was cited by 120 papers, 81 of which employed its estimator. We refer to \cite{10.1093/ectj/utz012} for an excellent summary of these empirical studies.

The model considered by \cite{canay2011simple} is a standard panel data model with individual fixed effects, where the unobserved idiosyncratic errors are subject to conditional quantile restrictions. The estimation method proposed by \cite{canay2011simple} consists of two steps: in the first step, the individual effects are estimated using the standard fixed effects estimator for linear panel data models; in the second step, the coefficients of the regressors are estimated using standard quantile regressions, treating the estimated individual effects from the first step as given. The intuition behind this two-step estimator is simple, and the its implementation in practice is very easy --- this explains its popularity among empirical researchers.

However, simplicity comes at costs. First, the consistency of \cite{canay2011simple}'s two-step estimator requires certain moments of the idiosyncratic errors to exist, thus the robustness of quantile regressions against heavy-tailed distributions is sacrificed. In comparison, \cite{galvao2016smoothed} estimate the coefficients and the individual effects jointly, and they only require the density functions of the idiosyncratic errors to exist. Second, the validity of \cite{canay2011simple}'s asymptotic analysis was called into question recently by  \cite{10.1093/ectj/utz012}, who discussed two potential errors in \cite{canay2011simple}'s theoretical results: (i) the assumption that $N/T^s\rightarrow 0$ for some $s>1$ is not enough to ignore the asymptotic bias of the estimated coefficients, and (ii) the asymptotic variance of the estimated coefficients derived by \cite{canay2011simple} is not correct.

While the second problem raised by \cite{10.1093/ectj/utz012} can be easily solved by deriving the correct asymptotic variance that takes into account both cross-sectional and serial correlations of the within-transformed regressors (see the exact definition below), the first problem is related to a fundamental issue in fixed effects estimator of nonlinear panel data models. To facilitate the discussion, let $\{Y_{it},X_{it}\}_{1\leq i\leq N,1\leq t\leq T}$ be a panel of observed variables, and let $\{\alpha_i\}_{1\leq i\leq N}$ be the unobserved individual effects. Let $\mathcal{L}$ be a smooth function where the true coefficients is defined by $\beta_0 = \argmin \Ex\left[  \mathcal{L}( Y_{it},\beta'X_{it} +\alpha_i ) | \alpha_i\right]$, so the fixed effects estimator is given by
\[ (\hat{\beta}^{FE},\hat{\alpha}_{1}^{FE},\ldots,\hat{\alpha}^{FE}_N)
=\argmin_{\beta,\alpha_1,\ldots,\alpha_N} \frac{1}{NT}\sum_{i=1}^{N}\sum_{t=1}^{T} \mathcal{L}( Y_{it},\beta'X_{it} +\alpha_i ).
\]
Under some regularity conditions, \cite{hahn2004jackknife} showed that
\[  \hat{\beta}^{FE} -\beta_0 = \frac{1}{\sqrt{NT}}\cdot \mathcal{N}(0,\mathbf{V}) + \frac{b}{T} +o_P(T^{-1}),\]
where $\mathcal{N}(0,\mathbf{V}) $ denotes a vector of normal random variables with means 0 and covariance matrix $\mathbf{V}$, and $b$ is a bias term due to the estimation errors $\hat{\alpha}_{1}^{FE},\ldots,\hat{\alpha}^{FE}_N$. Thus, under the assumption that $N/T\rightarrow \kappa^2>0$, it follows that $ \sqrt{NT} (\hat{\beta}^{FE} -\beta_0 ) \overset{d}{\rightarrow} \mathcal{N}(\kappa  b,\mathbf{V})$. The presence of $\kappa  b$ in the asymptotic distribution of $\hat{\beta}^{FE}$ means that the standard inference on $\beta_0$ using a consistent estimator $\hat{\mathbf{V}}$ of $\mathbf{V}$ is not valid. For example, let $\beta_j$ denote the $j$th element of $\beta$ and let $\mathbf{V}_{jj}$ denote the $j$th diagonal element of $\mathbf{V}$, then the coverage probability of the confidence interval
\[  \left[ \hat{\beta}^{FE}_j -1.96 \sqrt{ \hat{\mathbf{V}}_{jj}/NT  }, \hat{\beta}^{FE}_j +1.96 \sqrt{ \hat{\mathbf{V}}_{jj}/NT  } \right]    \] will not converge to $95\%$ (the nominal level) as $N,T$ go to infinity unless $b=0$. To solve this problem, one can use either analytical bias correction (see \citealt{hahn2004jackknife}) or split-panel jackknife (see \citealt{dhaene2015split}) to alleviate the term $b/T$. Alternatively, in applications where $T$ is much larger than $N$, the asymptotic bias can be simply ignored because $\kappa$ is close to 0. However, for the fixed effects estimator of quantile panel data models where $\mathcal{L}( Y_{it},\beta'X_{it} +\alpha_i ) = \rho_{\tau}(Y_{it} - \beta'X_{it}-\alpha_i)$ and $\rho_{\tau}(u)= (\mathbf{1}\{u\leq 0\}-\tau) u$ is the check function, \cite{kato2012asymptotics} showed that due to the non-smoothness of the check function,
\[  \hat{\beta}^{FE} -\beta_0 \approx \frac{1}{\sqrt{NT}}\cdot \mathcal{N}(0,\mathbf{V}) + O_P\left( \frac{1}{T^{3/4}} \right) ,\]
thus the condition $N^2/T\rightarrow 0$ is required to ignore the asymptotic bias\footnote{In fact, \cite{kato2012asymptotics} explains why it is in general very difficult to derive the analytical expression of the $O_P(T^{-3/4})$ term. Thus, it is unknown whether this term is a bias or a variance term.}. Based on the analysis of \cite{kato2012asymptotics}, we first decompose the stochastic expansion of \cite{canay2011simple}'s two-step estimator, point out an error in \cite{canay2011simple}'s proof that leads to the omission of the asymptotic bias, and argue that $N^2/T\rightarrow 0$ is needed to ignore the asymptotic bias --- this is in contrast to \cite{10.1093/ectj/utz012}'s claim that $N/T\rightarrow 0$ is the required condition for the asymptotic bias to disappear. As discussed above, ignoring the asymptotic bias could result in series problems in the inference on the true coefficients. In one of the simulated model, we find that the coverage rates of the confidence intervals (with $95\%$ nominal level) based on \cite{canay2011simple}'s estimator is lower than $3\%$ when $N=1000$ and $T=20$.

That being said, the main goal of our paper is to provide an alternative easy-to-implement estimator for quantile panel data models. Therefore, our paper can be viewed as both an complement and an extension of \cite{canay2011simple} and \cite{10.1093/ectj/utz012}\footnote{\cite{10.1093/ectj/utz012} also proposed an alternative estimator to reduce the bias, but the asymptotic analysis was not provided.}. The new estimator consists of two steps, where the first step is identical to the first step of \cite{canay2011simple}'s estimation approach, and in the second step we use smoothed quantile regressions instead of standard quantile regressions to estimate the coefficients of the regressors, treating the estimated fixed effects from the first step as given.  Thus, for the many empirical researchers who find \cite{canay2011simple}'s method attractive because of its computational convenience, the cost of learning the new estimator is very low. More importantly, this new estimator allows us to derive the analytical expression of its asymptotic bias. Given the analytical expression of bias, we show that it is fairly easy to constructed bias-corrected estimators and to make valid inference on the true coefficients.

The rest of the paper is organized as follows: In Section 2 we revisit \cite{canay2011simple}'s estimator and point out the main error in his asymptotic analysis. Section 3 introduces an alternative two-step estimator for quantile panel data models, and establishes its asymptotic distribution. We also discuss how to do bias corrections and how to make valid inference based on the bias-corrected estimators. Section 4 provides finite sample simulation results to support our theoretical analysis and to illustrate the importance of bias correction. Finally, Section 5 concludes.

\section{Revisiting Canay (2011)'s Estimator}
\subsection{The Model and Estimator}
Following \cite{canay2011simple},  we consider the following panel data model:
\begin{equation}\label{model1}
Y_{it} =\beta(U_{it})'W_{it}+\alpha_i,
\end{equation}
where $W_{it}=[1,X_{it}']'$, $ X_{it}\in\mathbb{R}^{d}$, $\beta(\cdot): [0,1] \mapsto \mathbb{R}^{d+1}$, $U_{it}\sim \mathcal{U}[0,1]| (X_{it},\alpha_i)$, and $\alpha_i$ represents the time-invariant individual effect. Note that $\beta_1(U_{it})$, the first element of $\beta(U_{it})$, can not be separately identified from $\alpha_i$. Thus, the normalization $\Ex[\beta_1(U_{it}) ]=0$ is imposed throughout the paper. Assuming that the mapping $\tau: \mapsto \beta(\tau)'W_{it}$ is strictly increasing for all $W_{it}$, then we have
\[ P[Y_{it}\leq \beta(\tau)'W_{it}+\alpha_i | X_{it},\alpha_i ] =\tau,\]
or
\begin{equation} \mathsf{Q}_{Y_{it}}[\tau|X_{it},\alpha_i] =\beta(\tau)'W_{it}+\alpha_i  .\end{equation}

Suppose that there is a panel of observed variables $(Y_{it},W_{it})$ for $i=1,\ldots, N$ and $t=1,\ldots,T$. The main object of interest is the partial quantile effect: $\beta(\tau) = \partial \mathsf{Q}_{Y_{it}}[\tau|X_{it},\alpha_{i}] /\partial W_{it}$. The two-step estimator for $\beta(\tau)$ of \cite{canay2011simple} can be defined as follows. First, let
\begin{equation*}
\hat{\alpha}_i = \bar{Y}_i -  \hat{\theta}'\bar{X}_i,
\end{equation*}
where $\bar{Y}_i =T^{-1}\sum_{t=1}^{T}Y_{it}$, $\bar{X}_i =T^{-1}\sum_{t=1}^{T}X_{it}$, and $\hat{\theta}$ is the standard fixed effect estimator, i.e.,
\[     \hat{\theta} = \left( \sum_{i=1}^{N}\sum_{t=1}^{T} \ddot{X}_{it}\ddot{X}_{it}' \right)^{-1}\left( \sum_{i=1}^{N}\sum_{t=1}^{T} \ddot{X}_{it}\ddot{Y}_{it} \right) ,  \]
where $\ddot{X}_{it} =X_{it} -\bar{X}_i$ and $\ddot{Y}_{it} =X_{it} -\bar{Y}_i$ are the within-transformed regressors and dependent variables. In the second step, $\beta(\tau)$ is simply estimated by:
\[ \tilde{\beta}(\tau) =  \argmin_{\beta} \sum_{i=1}^{N}\sum_{t=1}^{T}\rho_{\tau} \left(  Y_{it} -\beta'W_{it}-\hat{\alpha}_i \right).\]

\subsection{Expansion of $ \tilde{\beta}(\tau) -\beta(\tau)$}

To simply the notations, let $\{\alpha_{01}, \ldots, \alpha_{0N}\} $ be the realized values of the individual effects, and our analysis in this paper are conditional on them.

Note that by defining $\lambda_0 =[0,\theta_0']' = \Ex[ \beta(U_{it})]$, model \eqref{model1} can be written as (conditional on $\alpha_{01},\ldots,\alpha_{0N}$)
\[
  Y_{it} =\lambda_0'W_{it}+\alpha_{0i} +\epsilon_{it} =\theta_0'X_{it}+\alpha_{0i}+\epsilon_{it},\text{ where }  \epsilon_{it} = \left( \beta(U_{it}) -\lambda_0\right) 'W_{it}, \]
and $\Ex[\epsilon_{it}|X_{it}]=0$. In other words, model \eqref{model1} can be transformed into a standard linear panel data model where $\alpha_{01},\ldots,\alpha_{0N}$ can be consistently estimated using the fixed effects estimator in the first step.


Define $\psi(u) = \tau - \mathbf{1}\{u\leq 0\}$, and expanding $\Ex[ \psi( Y_{it} -\tilde{\beta}(\tau)'W_{it}-\hat{\alpha}_i )W_{it}  ]$ around $(\beta(\tau),\alpha_{0i})$ gives
\begin{multline}\label{eq3}
-\Ex[ \psi( Y_{it} -\tilde{\beta}(\tau)'W_{it}-\hat{\alpha}_i )W_{it}  ] =  \Ex[ \mathsf{f}_{it}(0|X_{it})W_{it} W_{it}' ]\cdot(\tilde{\beta}(\tau) -\beta(\tau) )
+\Ex[ \mathsf{f}_{it}(0|X_{it})W_{it} ]\cdot(\hat{\alpha}_i -\alpha_{0i} ) \\
- 0.5\cdot \Ex[ \mathsf{f}^{(1)}_{it}(0|X_{it})W_{it} ]\cdot(\hat{\alpha}_i -\alpha_{0i} )^2 + O(1)\cdot (\hat{\alpha}_i -\alpha_{0i} )^3 +o_P(\|\tilde{\beta}(\tau) -\beta(\tau) \|),
\end{multline}
where
\[u_{it} =  \left( \beta(U_{it}) -\beta(\tau)\right) 'W_{it} =Y_{it} - \beta(\tau)'W_{it}-\alpha_{0i}, \] $\fx_{it}$ is the density function of $u_{it}$, and $\fx_{it}^{(j)}(u) =\partial^j \fx(u)/\partial u^j $. Next, assume stationarity and define
\[\Sigma = \lim_{N\rightarrow\infty} \AverageN   \Ex \left[ \fx_{it}(0|X_{it}) W_{it}W_{it}' \right], \quad \gamma_i=\Ex[ \mathsf{f}_{it}(0|X_{it})W_{it} ],
\quad \eta_i = \Ex[ \mathsf{f}^{(1)}_{it}(0|X_{it})W_{it} ],
\]
\[  \mathbb{V}_{NT} (\beta,\alpha_1,\ldots,\alpha_N) =\frac{1}{\sqrt{NT}}\sum_{i=1}^{N}\sum_{t=1}^{T}\{ \psi( Y_{it} -\beta'W_{it}-\alpha_i )W_{it} -
\Ex[ \psi( Y_{it} -\beta'W_{it}-\alpha_i )W_{it}]\}.
\]
By the computational properties of the quantile regressions and equation \eqref{eq3}, we have the following stochastic expansion for $\tilde{\beta}(\tau) -\beta(\tau)$:
\begin{multline}\label{eq4}
\Sigma \cdot(\tilde{\beta}(\tau) -\beta(\tau) ) =\frac{1}{NT} \sum_{i=1}^{N}\sum_{t=1}^{T}\psi( u_{it} )W_{it}-\frac{1}{N} \sum_{i=1}^{N}\gamma_i\cdot(\hat{\alpha}_i -\alpha_{0i} ) + 0.5\frac{1}{N} \sum_{i=1}^{N}\eta_i\cdot(\hat{\alpha}_i -\alpha_{0i} )^2\\
+ \frac{1}{\sqrt{NT}} \left(\mathbb{V}_{NT} (\tilde{\beta}(\tau),\hat{\alpha}_{1},\ldots,\hat{\alpha}_{N})  -  \mathbb{V}_{NT} (\beta(\tau),\alpha_{01},\ldots,\alpha_{0N})  \right)
+o_P(\|\tilde{\beta}(\tau) -\beta(\tau) \|) +o_P(T^{-1}).
\end{multline}
Similar to the proof Theorem 2 below, it can be shown that
\[ \frac{1}{NT} \sum_{i=1}^{N}\sum_{t=1}^{T}\psi( u_{it} )W_{it}-\frac{1}{N} \sum_{i=1}^{N}\gamma_i\cdot(\hat{\alpha}_i -\alpha_{0i} ) + 0.5\frac{1}{N} \sum_{i=1}^{N}\eta_i\cdot(\hat{\alpha}_i -\alpha_{0i} )^2= \frac{1}{\sqrt{NT}}\cdot \mathcal{N}(0, \Omega) + \frac{c}{T} + o_P(T^{-1}) \]
where $\Omega$ is a covariance matrix which will be defined in Theorem 2 (also see Remark 4 for a mistake in \citealt{canay2011simple}'s expression for the variance matrix) and $c$ is a nonzero constant vector. The key step in \cite{canay2011simple}'s analysis is to show that
\begin{equation}\label{eq5} \|\mathbb{V}_{NT} (\tilde{\beta}(\tau),\hat{\alpha}_{1},\ldots,\hat{\alpha}_{N}) -  \mathbb{V}_{NT} (\beta(\tau),\alpha_{01},\ldots,\alpha_{0N})\|=o_P(1),\end{equation}
which was proved in Lemma A.1 of \cite{canay2011simple}. Inspecting the proof of the above result in \cite{canay2011simple}, it is clear that the following inequality was assumed to be true:
\begin{multline}\label{eq6}\|\mathbb{V}_{NT} (\tilde{\beta}(\tau),\hat{\alpha}_{1},\ldots,\hat{\alpha}_{N}) -  \mathbb{V}_{NT} (\beta(\tau),\alpha_{01},\ldots,\alpha_{0N})\|  \\
\leq  \sup_{\mbox{\tiny$\begin{array}{c}
\| \beta-\beta(\tau)\| \leq \| \tilde{\beta}(\tau) -\beta(\tau) \| \\
\| \alpha_a -\alpha_b\| \leq \max_{1\leq i\leq N}\| \hat{\alpha}_i -\alpha_{0i}\| \end{array}$}}  \| \mathbb{U}_{NT} (\beta,\alpha_{a}) -  \mathbb{U}_{NT} (\beta(\tau),\alpha_b)\|,
 \end{multline}
where
\[\mathbb{U}_{NT} (\beta,\alpha) =\frac{1}{\sqrt{NT}}\sum_{i=1}^{N}\sum_{t=1}^{T}\{ \psi( Y_{it} -\beta'W_{it}-\alpha )W_{it} -
\Ex[ \psi( Y_{it} -\beta'W_{it}-\alpha )W_{it}]. \]
By the consistency of $\tilde{\beta}$ and the uniform consistency of $\hat{\alpha}_i$, the right-hand side of \eqref{eq6} can be shown to be $o_P(1)$ because the empirical process $\mathbb{U}_{NT}$ is stochastically equicontinous.

However, it not difficult to see that inequality \eqref{eq6} does not hold in general, and thus the proof of \eqref{eq5} in \cite{canay2011simple} is not correct. In fact, using the arguments of \cite{kato2012asymptotics}, one can show that
\[ \frac{1}{\sqrt{NT}} \left( \mathbb{V}_{NT} (\beta(\tau),\alpha_{01},\ldots,\alpha_{0N})  -\mathbb{V}_{NT} (\tilde{\beta}(\tau),\hat{\alpha}_{1},\ldots,\hat{\alpha}_{N}) \right) \approx O_P\left( \frac{1}{T^{3/4}} \right).\]
Therefore, under the assumption that $N/T\rightarrow \kappa^2>0$, we have
\[ \sqrt{NT}(  \tilde{\beta}(\tau) -\beta(\tau)) \approx \mathcal{N}(\kappa \cdot   \Sigma^{-1} c , \Sigma^{-1}\Omega\Sigma^{-1} ) +O_P\left( \frac{\sqrt{N}}{ T^{1/4}}\right) +o_P(1). \]
Thus, similar to \cite{kato2012asymptotics}, the condition on $N,T$ to ignore the asymptotic bias of the estimator is that $N^2/T\rightarrow 0$, which is different from \cite{canay2011simple}'s assumption that $N/T^s\rightarrow 0$ for some $s>1$ and \cite{10.1093/ectj/utz012}'s claim that $N/T\rightarrow 0$ is required.

Moreover, even if \eqref{eq6} is right and the $O_P(\sqrt{N}/T^{1/4})$ term can be dropped from the above equation, \cite{canay2011simple} still made two mistakes in deriving the asymptotic distribution of his estimator: the asymptotic bias $\kappa \cdot   \Sigma^{-1} c$ is omitted and the expression of $\Omega$ is not correct (see Remark 4 below).

The consequence of ignoring the asymptotic bias of \cite{canay2011simple}'s estimator when $T$ is small compared to $N$ is illustrated using Monte Carlo simulations in Section 4, where we show that the common practice of constructing confidence intervals using \cite{canay2011simple}'s estimator could result in coverage rates that are much lower than the nominal level.


\section{A New Estimator Based on Smoothed Quantile Regressions}
\subsection{The New Estimator}
In this paper, to solve \cite{canay2011simple}'s problem discussed in the previous section, we propose a new two-step estimator based on smoothed quantile regression (SQR hereafter). The first step of our estimation method is the same as the first step of \cite{canay2011simple}'s two-step estimator, i.e., the individual effects are estimated using the standard fixed effects estimators for linear panel data models: $\hat{\alpha}_i =  \bar{Y}_i -  \hat{\theta}'\bar{X}_i$, where $\hat{\theta}$ is defined in Section 2.
%

In the second step, inspired by \cite{galvao2016smoothed}, we propose to estimate $\beta(\tau)$ using the following SQR:
\begin{equation} \hat{\beta}(\tau) =  \argmin_{\beta} \sum_{i=1}^{N}\sum_{t=1}^{T} \left[  \tau - K\left( \frac{Y_{it} -\beta'W_{it}-\hat{\alpha}_i}{h}\right)  \right] \cdot \left(  Y_{it} -\beta'W_{it}-\hat{\alpha}_i \right) ,\end{equation}
where $K(z) =1- \int_{-\infty}^{z}k(u)du $, $k(\cdot)$ is a continuous function with support $[-1,1]$ and symmetric around 0, and $h$ is a bandwidth parameter.

\vspace{0.2cm}
\noindent{\textbf{Remark 1}: }As pointed out by \cite{kato2012asymptotics}, the main difficulty in deriving the analytical expression for the bias of \cite{canay2011simple}'s estimator originates from the non-smoothness of the check function. The main motivation of using SQR in the second step of the new estimator is to approximate the indicator function by a smooth function. Similar ideas has been explored by \cite{amemiya1982two} and \cite{horowitz1998bootstrap}, but for different objectives. The main purpose of using SQR in our estimator is that it allows us to work out the analytical expression of the asymptotic bias of the estimator, which provides the theoretical basis of using analytical and split-panel jackknife bias corrections.  \qed

\vspace{0.2cm}
\noindent{\textbf{Remark 2}: }In terms of computational cost, the new estimator is slightly more complicated than the estimator of \cite{canay2011simple}, because in the second step the new estimator has to solve a nonlinear minimization problem, while the standard quantile regression in the second step of \cite{canay2011simple}'s estimator can be efficiently solved by linear programming. However, since it only estimate $d+1$ parameters in the second step, the new estimator is still much simpler than the estimator of \cite{galvao2016smoothed} which estimate $d+1+N$ parameters in a nonlinear minimization problem. Moreover, to reduce the computational cost, we can use \cite{canay2011simple}'s estimator (which is consistent) as the initial value in our second step.  \qed

%

\subsection{Consistency}
To prove the consistency of the new estimator, we impose the following conditions:

\begin{ass}  Let $C, \underline{\rho}$ be positive constants and $\mu_i=\Ex[X_{it}]$, \\
(i) $(u_{it},X_{it})$ is independent of $(u_{js},X_{js})$ for any $i\neq j$. For each $i$, the distributions of $(u_{i1},X_{i1})$,\ldots, $(u_{iT},X_{iT})$ are identical.\\
(ii) $\rho_{\text{min}}  \left( \Ex \left[ \fx_{it}(0|X_{it}) W_{it}W_{it}' \right]  \right) \geq \underline{\rho}$ for all $i,t$.    \\
(iii) $\Ex \left[  \|X_{it}\|^4\right]\leq C$, and $\Ex[|u_{it}|^4 ]\leq C$ for all $i,t$.\\
(iv) $h\rightarrow0$ as $N,T\rightarrow\infty$.\\
(v) $ \sup_{\tau\in(0,1)} \| \beta(\tau) \| \leq C $. \\
(vi) Let $\tilde{X}_{it}=X_{it} -\mu_i$. Then $\mathbf{B} =   \lim_{N\rightarrow\infty}N^{-1}\sum_{i=1}^{N} \Ex[\tilde{X}_{it}\tilde{X}_{it}']$
is positive definite.
\end{ass}
Assumption 1(i), which is also imposed by \cite{canay2011simple}, is admittedly strong, but it can be relaxed at the expense of much lengthier proofs to allow for serially dependence such as $\beta$-mixing. Assumption 1(ii) is the standard identification condition in quantile regressions, and it is widely used in the literature.

 Unlike \cite{kato2012asymptotics} and \cite{galvao2016smoothed} that only require the density of $u_{it}$ to exist, the consistency our estimator needs the fourth moments of $u_{it}$ to be finite. The lost of robustness against heavy-tailed distributions (such as Cauchy distribution) is the price one has to pay for employing the simpler two-step approaches. Note that by definition, $\epsilon_{it}$ is related to $u_{it}$ by
\begin{equation}\label{eq8}
\epsilon_{it} =\left( \beta(U_{it}) -\lambda_0\right) 'W_{it}= u_{it} + [\beta(\tau) -\lambda_0]'W_{it}.
\end{equation}
Thus, Assumptions 1(i) and 1(iii) imply that $\Ex[ |\epsilon_{it}|^4] < \infty$ and that $\epsilon_{it}$ is independent of $\epsilon_{js}$ for any $i\neq j $ or $t\neq s$.

Last but not least, for consistency, $h$ is required to converge to 0 in the limit, but different from \cite{kato2012asymptotics} and \cite{galvao2016smoothed}, we don't impose any restrictions on the relative sizes of $N$ and $T$ as long as they both diverge to infinity.

Then, we can show that:
\begin{thm}Suppose that Assumption 1 holds, then $\|\hat{\beta}(\tau) -\beta(\tau)\| =o_P(1)$.
\end{thm}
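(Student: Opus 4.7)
The plan is to prove consistency along the usual $M$-estimator route: identify $\beta(\tau)$ as the unique, well-separated minimizer of a population objective; show that the smoothed empirical objective converges to that population objective uniformly on a compact set containing $\beta(\tau)$ in its interior; then invoke the argmin continuous mapping theorem. Two complications distinguish the present setting from a textbook application: the loss has been smoothed by $K(\cdot/h)$ and so need not be globally convex, and the individual effects $\hat{\alpha}_i$ enter as first-step plug-ins.

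First I dispose of the smoothing error. Because $k$ has support $[-1,1]$, we have $K(z/h)=0$ for $z\geq h$ and $K(z/h)=1$ for $z\leq -h$, so the loss $g_{\tau,h}(z):=[\tau-K(z/h)]z$ agrees with $\rho_{\tau}(z)$ outside $[-h,h]$ and differs from it by at most $|z|\leq h$ inside. Hence $\sup_{z}|g_{\tau,h}(z)-\rho_{\tau}(z)|\leq h=o(1)$ by Assumption 1(iv). Next I control the plug-in error. Writing $\hat{\alpha}_i-\alpha_{0i}=\bar{\epsilon}_i-(\hat{\theta}-\theta_0)'\bar{X}_i$, Assumption 1(i) together with the fourth-moment bound 1(iii) (via the representation \eqref{eq8} and 1(v)) gives $\Ex[\bar{\epsilon}_i^2]=O(T^{-1})$ uniformly in $i$, while standard linear-FE theory under 1(iii) and 1(vi) yields $\|\hat{\theta}-\theta_0\|=O_P((NT)^{-1/2})$; Cauchy--Schwarz then produces $N^{-1}\sum_i|\hat{\alpha}_i-\alpha_{0i}|=O_P(T^{-1/2})=o_P(1)$.

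Using the $1$-Lipschitz property of $\rho_{\tau}$ and the two bounds above, the empirical objective $Q_{NT}(\beta):=(NT)^{-1}\sum_{i,t}g_{\tau,h}(Y_{it}-\beta'W_{it}-\hat{\alpha}_i)$ differs from the oracle check-function objective $\tilde{Q}_{NT}(\beta):=(NT)^{-1}\sum_{i,t}\rho_{\tau}(u_{it}-(\beta-\beta(\tau))'W_{it})$ by a $\beta$-free $o_P(1)$ term. A standard ULLN (using independence across $i$ from 1(i), bounded moments from 1(iii), and the Lipschitz continuity of $\rho_{\tau}$) then delivers $\sup_{\beta\in\Theta}|\tilde{Q}_{NT}(\beta)-\bar{Q}(\beta)|=o_P(1)$ on any compact $\Theta$, where $\bar{Q}(\beta):=\lim_{N}N^{-1}\sum_i\Ex[\rho_{\tau}(u_{it}-(\beta-\beta(\tau))'W_{it})]$. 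Identification is then immediate: the conditional $\tau$-quantile characterization makes $\beta(\tau)$ a stationary point of $\bar{Q}$, and Assumption 1(ii) yields a uniformly positive-definite Hessian of $\bar{Q}$ in a neighborhood of $\beta(\tau)$, so that $\beta(\tau)$ is the unique, well-separated minimizer.

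The main obstacle is reducing to a compact parameter space, since $g_{\tau,h}$ is not globally convex and the usual convexity-based argmin lemmas do not apply directly. I would handle this by noting that $g_{\tau,h}(z)\geq \rho_{\tau}(z)-h$ pointwise, so $Q_{NT}(\beta)$ dominates a genuine plug-in check-function objective minus the vanishing constant $h$. That lower-bound objective is convex in $\beta$ and coercive by Assumption 1(vi) together with the bound $\rho_{\tau}(z)\geq \min(\tau,1-\tau)|z|$, so $\hat{\beta}(\tau)$ is bounded in probability. The minimization can therefore be restricted to a compact $\Theta$ containing $\beta(\tau)$ in its interior, after which the argmin continuous mapping theorem delivers $\|\hat{\beta}(\tau)-\beta(\tau)\|=o_P(1)$.
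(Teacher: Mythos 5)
Your proposal follows essentially the same route as the paper's proof of Theorem 1: bound the smoothing error by $O(h)$ using the compact support of $k$, bound the plug-in error through $N^{-1}\sum_i|\hat{\alpha}_i-\alpha_{0i}|=o_P(1)$, establish a uniform law of large numbers on a neighbourhood of $\beta(\tau)$ by a covering argument, and combine the identification condition with convexity of the check function to get a well-separated minimum. The only real difference is presentational: the paper never needs a separate $\hat{\beta}(\tau)=O_P(1)$ step, since it projects any point outside $B(\delta)$ radially onto the boundary via the convexity inequality for $\rho_{\tau}$, which is the cleaner form of your coercivity argument --- and note that the nondegeneracy driving that step is Assumption 1(ii) on $\Ex[\fx_{it}(0|X_{it})W_{it}W_{it}']$ rather than 1(vi), which concerns the demeaned regressors and only the first-step estimator.
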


\vspace{0.2cm}
\noindent{\textbf{Remark 3}: }Assumption 4.2 of \cite{canay2011simple} assumes that
\begin{equation}\label{eq9}  \sqrt{NT}( \hat{\theta} -\theta_0) = \frac{1}{\sqrt{NT}} \sum_{i=1}^{N}\sum_{t=1}^{T} \delta_{it}  +o_P(1), \end{equation}
where $\delta_{it}$ is an i.i.d sequence of zero-mean random vectors such that $\sqrt{NT} (  \hat{\theta} -\theta_0 ) \overset{d}{\rightarrow}\mathcal{N}(0,\mathbf{V})$ with
\[ \mathbf{V} = \lim_{N,T\rightarrow\infty}\frac{1}{NT} \sum_{i=1}^{N}\sum_{t=1}^{T} \Ex[ \delta_{it}\delta_{it}' ]. \]
However, as pointed out by \cite{10.1093/ectj/utz012}, \eqref{eq9} is unlikely to hold. Instead, in Lemma 1 of the Appendix, we provide a rigorous proof that
\[ \sqrt{NT}( \hat{\theta} -\theta_0) = \mathcal{N}(0, \mathbf{B}^{-1} \Sigma_\theta \mathbf{B}^{-1}) +o_P\left( \sqrt{N}/\sqrt{T}\right),\]
where
\[\Sigma_{\theta} = \lim_{N\rightarrow\infty} \frac{1}{N}\sum_{i=1}^{N} \Ex[ \epsilon_{it}^2 \tilde{X}_{it}\tilde{X}_{it}'] .\]
\qed

\subsection{Asymptotic Distribution}
To establish the asymptotic distribution of the new estimator, the following conditions are imposed:
\begin{ass}(i) $X_{it}\in\mathcal{X}$ for all $i,t$ and $\mathcal{X}$ is compact.\\
(ii) Let $q\geq 4$, and let $\fx_{it}^{(j)}(c) = \partial \fx_{it}^{(j)}(c)/\partial c$, $\fx_{it}^{(j)}(c|X_{it}) = \partial \fx_{it}^{(j)}(c|X_{it})/\partial c$ $j=1,\ldots,q$. Then for each $j$, $|\fx_{it}^{(j)}(c) |$ and $ |\fx_{it}^{(j)}(c|X_{it})|$  are uniformly bounded for all $i,t$.\\
(iii) For $q\geq 4$, $\int_{-1}^{1}k(u)du=1$, $\int_{-1}^{1}k(u)u^jdu=0$ for $j=1,\ldots,q-1$ and $\int_{-1}^{1}k(u)u^qdu \neq 0$.\\
(iv) $N/T \rightarrow \kappa^2>0$ as $N,T\rightarrow \infty$. $h\asymp T^{-c}$ and $1/q < c< 1/3$.
\end{ass}
The above assumptions are identical to Assumptions (A2), (A5), (A6) and (A7) of \cite{galvao2016smoothed}. We refer to \cite{galvao2016smoothed} for the details of these assumptions. The following theorem gives the asymptotic distribution of the new estimator.

%

\begin{thm}Under Assumptions 1 and 2, as $N,T\rightarrow\infty$,
\[ \sqrt{NT}(\hat{\beta}(\tau) -\beta(\tau)) \overset{d}{\rightarrow} \mathcal{N}\left( \kappa b, \Sigma^{-1}\Omega\Sigma^{-1} \right),\]
where
\[b= [\lambda_0 -\beta(\tau) ]+ 0.5 \Sigma^{-1}  \cdot   \lim_{N\rightarrow\infty} N^{-1}\sum_{i=1}^{N}\eta_i \Ex[\epsilon_{it}^2] ,\]
\begin{multline*}
\Omega = \tau(1-\tau)\cdot \lim_{N\rightarrow\infty}\frac{1}{N}\sum_{i=1}^{N} \Ex[  W_{it}W_{it}'] + \lim_{N\rightarrow\infty}\frac{1}{N}\sum_{i=1}^{N}\gamma_i \gamma_i' \cdot \Ex\left[\epsilon_{it}^2 \right] +\mathbf{A}\mathbf{B}^{-1} \mathbf{A}'\\
-2 \lim_{N\rightarrow\infty}\frac{1}{N}\sum_{i=1}^{N} \Ex\left[ \left(\tau - \mathbf{1}\{u_{it}\leq 0\}\right)u_{it}W_{it}      \cdot ( \gamma_i - \mathbf{A}\mathbf{B}^{-1}\tilde{X}_{it})' \right]+2 \lim_{N\rightarrow\infty}\frac{1}{N}\sum_{i=1}^{N}\gamma_i\Ex\left[\tilde{X}_{it}'\epsilon_{it}^2 \right]\mathbf{B}^{-1} \mathbf{A}',
\end{multline*}
and
\[\mathbf{A} = \lim_{N\rightarrow\infty}N^{-1}\sum_{i=1}^{N}\gamma_i \mu_i'.\]
\end{thm}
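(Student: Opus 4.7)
The plan is to Taylor-expand the smoothed first-order condition, substitute the first-step representation of $\Delta_i = \hat\alpha_i - \alpha_{0i}$, and apply a CLT to the stochastic pieces, all while carefully tracking a within-individual covariance that generates the $[\lambda_0 - \beta(\tau)]$ part of the bias.

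Starting from the FOC $\frac{1}{NT}\sum \phi_K(Y_{it} - \hat\beta(\tau)'W_{it} - \hat\alpha_i)W_{it} = 0$ where $\phi_K(z) = \tau - K(z/h)$, the smoothness of $K$ lets me Taylor-expand pointwise around $u_{it}$: setting $v_{it} = (\hat\beta(\tau) - \beta(\tau))'W_{it} + \Delta_i$, $\phi_K(u_{it} - v_{it}) = \phi_K(u_{it}) - \phi_K'(u_{it}) v_{it} + \tfrac{1}{2}\phi_K''(u_{it}) v_{it}^2 + \cdots$. Absorbing the kernel-level bias $\Ex[\phi_K(u_{it})W_{it}] = O(h^q) = o(T^{-1})$ from Assumption 2(iv) and identifying the $\Sigma(\hat\beta(\tau) - \beta(\tau))$ coefficient via $T^{-1}\sum_t \phi_K'(u_{it})W_{it}W_{it}' \to \Sigma_i$ yields the Bahadur-type representation $\Sigma(\hat\beta(\tau) - \beta(\tau)) = \tfrac{1}{NT}\sum \phi_K(u_{it})W_{it} - \tfrac{1}{NT}\sum \phi_K'(u_{it})\Delta_i W_{it} + \tfrac{1}{2NT}\sum \phi_K''(u_{it})\Delta_i^2 W_{it} + o_P(T^{-1})$.

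The key step is the expectation of the first-order $\Delta_i$ term: naively replacing $T^{-1}\sum_t\phi_K'(u_{it})W_{it}$ by $\gamma_i$ would miss a within-individual correlation, because $\Delta_i$ contains $\bar\epsilon_i = T^{-1}\sum_s \epsilon_{is}$ and the summand $\epsilon_{it}$ correlates with $\phi_K'(u_{it})$ through the identity $\epsilon_{it} = u_{it} + (\beta(\tau) - \lambda_0)'W_{it}$. Using Assumption 2(iii) and integration by parts I obtain $\Ex[\phi_K'(u_{it}) \epsilon_{it} W_{it}] = \Sigma_i(\beta(\tau) - \lambda_0) + o(T^{-1})$, and only the $s = t$ summand of $\bar\epsilon_i$ survives, so $\Ex\bigl[\tfrac{1}{NT}\sum \phi_K'(u_{it})\Delta_i W_{it}\bigr] = T^{-1}\Sigma(\beta(\tau) - \lambda_0) + o(T^{-1})$. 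Premultiplying by $\Sigma^{-1}$ and scaling by $\sqrt{NT}$ produces the $\kappa[\lambda_0 - \beta(\tau)]$ piece of $\kappa b$. The second-order term is handled analogously: $T^{-1}\sum_t \phi_K''(u_{it})W_{it} \to -\eta_i$ by LLN plus integration by parts, and a moment computation using Assumption 1(i) and Lemma 1 gives $\Ex[\Delta_i^2] = T^{-1}\Ex[\epsilon_{it}^2] + o(T^{-1})$, producing the $\tfrac{1}{2}\Sigma^{-1}\lim_N N^{-1}\sum_i \eta_i\Ex[\epsilon_{it}^2]$ piece of $b$.

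For the stochastic part, substituting Lemma 1's Bahadur expansion of $\hat\theta - \theta_0$ into $\Delta_i = \bar\epsilon_i - (\hat\theta - \theta_0)'\bar X_i$ combines all zero-mean components into a single $(NT)^{-1/2}$-scale sum $\tfrac{1}{\sqrt{NT}}\sum \xi_{it}$, where $\xi_{it}$ is built from $\phi_K(u_{it})W_{it}$, $\gamma_i\epsilon_{it}$, and $\mathbf{A}\mathbf{B}^{-1}\tilde X_{it}\epsilon_{it}$. A Lindeberg CLT then delivers convergence to $\mathcal{N}(0, \Omega)$, with the five terms in the stated $\Omega$ emerging as the within-$(i,t)$ variances and cross-covariances among these three vectors. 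The main obstacle is the within-individual covariance identification above: that $\mathrm{Cov}(\bar\epsilon_i, T^{-1}\sum_t\phi_K'(u_{it})W_{it}) = T^{-1}\Sigma_i(\beta(\tau) - \lambda_0) + o(T^{-1})$ is invisible from a naive LLN on $T^{-1}\sum_t \phi_K'(u_{it})W_{it}$. The smoothness of $K$ is essential, because it makes $\Ex[\phi_K'(u_{it})\epsilon_{it}W_{it}]$ a well-defined quantity amenable to integration by parts; no such analog is available for Canay's non-smooth estimator, which is precisely why that analysis missed the bias. Secondary hurdles are verifying stochastic equicontinuity of the smoothed-score empirical process to justify the $o_P(T^{-1})$ remainder (easier than in the non-smooth case, thanks to the bandwidth condition $1/q < c$) and bookkeeping the $O(h^q) = o(T^{-1})$ kernel-level bias under Assumption 2(iii).
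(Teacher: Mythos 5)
Your proposal follows essentially the same route as the paper's proof: Taylor-expand the smoothed first-order condition around $(\beta(\tau),\alpha_{01},\ldots,\alpha_{0N})$, extract the $[\lambda_0-\beta(\tau)]$ bias from the within-individual covariance between $\bar\epsilon_i$ and the score derivative via the identity $\epsilon_{it}=u_{it}+(\beta(\tau)-\lambda_0)'W_{it}$ (the paper's term $N^{-1}\sum_i\tilde\gamma_i\bar\epsilon_i$ in Step 1), extract the $\eta_i\Ex[\epsilon_{it}^2]$ bias from the $\Delta_i^2$ term, substitute Lemma 1's expansion of $\hat\theta-\theta_0$, and close with a Lyapunov CLT for $\Omega$. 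Two bookkeeping points to fix: the exact score of the stated minimization problem is $\varrho^{(1)}(u)=\tau-K(u/h)+k(u/h)u/h$ rather than your $\phi_K(u)=\tau-K(u/h)$ (the extra term is asymptotically negligible but the paper carries it throughout, and it is what appears in $Z_{it}$ and hence in $\hat\Omega$), and your claim that $T^{-1}\sum_t\phi_K''(u_{it})W_{it}\to-\eta_i$ is inconsistent with your conclusion that the $+\tfrac12\phi_K''(u_{it})\Delta_i^2$ term produces $+0.5\,\Sigma^{-1}\lim_N N^{-1}\sum_i\eta_i\Ex[\epsilon_{it}^2]$ --- the paper's Lemma 2 records $\Ex[\varrho^{(3)}_{it}W_{it}]=\eta_i+O(h^{q-1})$, so one of your two signs must be reversed for the pieces to match the stated $b$.
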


\vspace{0.2cm}
\noindent{\textbf{Remark 4}: }In the proof of Theorem 2, we establish the following Bahadur representation for $\hat{\beta}(\tau)$:
\begin{equation*}
\sqrt{NT}(\hat{\beta}(\tau) -\beta(\tau)) =\Sigma^{-1} \frac{1}{\sqrt{NT}} \sum_{i=1}^{N} \sum_{t=1}^{T}Z_{it}+\sqrt{\frac{N}{T}}\cdot b+o_P(1) ,
\end{equation*}
where $Z_{it}=\varrho^{(1)}_{it}W_{it}-\gamma_i\epsilon_{it}-\mathbf{A}\mathbf{B}^{-1} \tilde{X}_{it}\epsilon_{it}$,
and $\Omega$ is is limit of $ N^{-1}\sum_{i=1}^{N}\Ex[Z_{it}Z_{it}']$. The term $\mathbf{A}\mathbf{B}^{-1} \tilde{X}_{it}\epsilon_{it}$ represents the effects of estimating $\theta_0$ using the fixed effects estimator in the first step. However, this term is omitted in the covariance matrix derived in Theorem 4.2 of \cite{canay2011simple}.  \qed

\subsection{Bias Correction and Inference}
Theorem 2 provides the basis of analytical and split-panel jackknife bias correction.

First, consider analytical bias correction. Define
\[ \hat{\epsilon}_{it} = Y_{it} -\hat{\theta}'X_{it} -\hat{\alpha}_i, \quad \hat{u}_{it} = Y_{it} -\hat{\beta}(\tau)'W_{it} -\hat{\alpha}_i, \]
\[  \hat{\varrho}^{(1)}_{it} = \tau - K(\hat{u}_{it}/h) ,\quad \hat{\varrho}^{(2)}_{it} = k(\hat{u}_{it}/h)1/h, \quad  \hat{\varrho}^{(3)}_{it} = k^{(1)}(\hat{u}_{it}/h)1/h^2, \]
\[ \hat{\Sigma} =  \AverageNT   \hat{\varrho}^{(2)}_{it} W_{it}W_{it}', \quad \hat{\eta}_i = \frac{1}{T}\sum_{t=1}^{T} \hat{\varrho}_{it}^{(3)}W_{it} ,\]
\[\hat{b} = \hat{\lambda} -\hat{\beta}(\tau)+0.5\cdot \hat{\Sigma}^{-1} \cdot \AverageNT \hat{\eta}_i \hat{\epsilon}_{it}^2, \text{ where } \hat{\lambda}=[0,\hat{\theta}']'.\]
Then the estimator with analytical bias correction is defined as
\[ \hat{\beta}_{abc}(\tau) =\hat{\beta}(\tau) - \hat{b}/T. \]

Next, consider split-panel jackknife method. Let $\hat{\beta}_{1}(\tau)$ be our two-step estimator using the sample $\{ (Y_{it},X_{it}),i=1,\ldots,N,t=1,\ldots,T/2\}$, and let $\hat{\beta}_{2}(\tau)$ be our two-step estimator using the sample $\{ (Y_{it},X_{it}),i=1,\ldots,N,t=1+T/2,\ldots,T\}$. Then the estimator with split-panel jackknife is defined as
\[ \hat{\beta}_{spj}(\tau) =2 \hat{\beta}(\tau) - 0.5( \hat{\beta}_1(\tau)+\hat{\beta}_2(\tau)) . \]

Moreover, to make inference we need to estimate the covariance matrix. Define
\[\hat{\gamma}_i = \frac{1}{T}\sum_{t=1}^{T} \hat{\varrho}_{it}^{(2)}W_{it}, \quad \hat{\mathbf{A}} =\AverageN \hat{\gamma}_i \bar{X}_i', \quad
\hat{\mathbf{B}}=\AverageNT \ddot{X}_{it} \ddot{X}_{it}',
\]
\[ \hat{Z}_{it} = \hat{\varrho}^{(1)}_{it}W_{it} - \hat{\gamma}_i \hat{\epsilon}_{it} -\hat{\mathbf{A}} \hat{\mathbf{B}}^{-1}\ddot{X}_{it}\hat{\epsilon}_{it}.\]
According to Remark 4, the estimator of $\Omega$ can be constructed as
\begin{equation}\label{eq10} \hat{\Omega} = \AverageNT \hat{Z}_{it}\hat{Z}_{it}'.\end{equation}

Under Assumptions 1 and 2, similar to the proof of Theorem 2, it can be shown that
\begin{equation*}
\sqrt{NT}(\hat{\beta}_{abc}(\tau) -\beta(\tau)) \overset{d}{\rightarrow} \mathcal{N}\left(0, \Sigma^{-1}\Omega\Sigma^{-1} \right),
\end{equation*}
\begin{equation*}
\sqrt{NT}(\hat{\beta}_{spj}(\tau) -\beta(\tau)) \overset{d}{\rightarrow} \mathcal{N}\left( 0,\Sigma^{-1}\Omega\Sigma^{-1} \right),
\end{equation*}
and
\begin{equation*}
\hat{\Sigma}^{-1}\hat{\Omega}\hat{\Sigma}^{-1} \overset{p}{\rightarrow}\Sigma^{-1}\Omega\Sigma^{-1}.
\end{equation*}
The above results ensure that inferences based on the bias-corrected estimators and the estimated variance are asymptotically valid. The finite sample performances of $\tilde{\beta}$ (Canay's two-step estimator), $\hat{\beta}_{abc}$, $\hat{\beta}_{spj}$, and the coverage rates of the corresponding confidence intervals are evaluated in the next section.

\section{Finte Sample Simulations}
In this section, we use Monte Carlo simulations to study the finite sample performances of the proposed estimators. To facilitate the comparison, we use the following data generating process (DGP) identical to the ones used by \cite{canay2011simple}:
\[ Y_{it}=(\epsilon_{it}-1)+\epsilon_{it}X_{it}+\alpha_i,\quad  \alpha_i=\gamma(X_{i1}+\dots+X_{iT}+\lambda_i)-\Ex(\alpha_i) ,\]
where $X_{it}\sim i.i.d \text{ }Beta(1,1)$, $\lambda_i\sim i.i.d \text{ }N(0,1)$, and $\gamma=2$. As in \cite{canay2011simple}, we consider three different models with different distributions for $\epsilon_{it}$: in Model 1, $\epsilon_{it}\sim i.i.d \text{ }N(2,1)$; in Model 2, $\epsilon_{it}\sim i.i.d \text{ }\exp (1)+2$; in Model 3, $\epsilon_{it} \sim  i.i.d \text{ }B_{it}\cdot \mathcal{N}(1,1) +(1-B_{it})\cdot\mathcal{N}(3,1)$ with $B_{it} \sim Bernoulli(0.3)$. In addition, to see how the proposed estimators perform when the errors have heavy-tailed distributions, we consider Model 4 where $\epsilon_{it}\sim i.i.d \text{ }\mathcal{T}(5)$ where $\mathcal{T}(5)$ represents Student's t distribution with five degrees of freedom.

We focus on the coefficient of $X_{it}$, which is given by $\mathsf{Q}_{\epsilon}(\tau) $ at quantile $\tau$, where $\mathsf{Q}_{\epsilon}(\tau) $ is the quantile function of $\epsilon_{it}$. The following three estimators of $\beta(\tau)$ are considered:
\begin{itemize}
\item $\tilde{\beta}(\tau):$ Canay's two-step estimator;
\item $\hat{\beta}_{abc}(\tau):$ The new two-step estimator with analytical bias correction;
\item $\hat{\beta}_{spj}(\tau):$ The new two-step estimator with split-panel jackknife bias correction.
\end{itemize}
The biases, mean square errors (MSEs), and the coverage rates of the $95\%$ confidence intervals of the three estimators are compared. To construct the confidence intervals, \eqref{eq10} is used to calculate the variances of the three estimators.

In the SQR, the following fourth-order kernel function is used:
\[ k(u)=\frac{105}{64}(1-5u^2+7u^4-3u^6)\mathbf{1}(|u|\leq 1) \]
and the bandwidth is set as $h=0.8$. We have also tried other choices of $h$ and the results are similar. More simulation results with other choices of $h$ are available upon request.

The simulation results (from 1000 replications) for Model 1 to Model 4 at $\tau=0.25$ and $\tau=0.9$ are reported in Table 1 to Table 4 respectively.

{\small
 \begin{center}
  \begin{threeparttable}
  \caption{Biases, MSEs and Coverage Rates for Model 1}

    \begin{tabular}{ccccccccccc}
    \toprule
   &  &
    \multicolumn{3}{c}{Biases}&\multicolumn{3}{c}{MSEs}&\multicolumn{3}{c}{Coverage Rates ($95\%$)}\cr
    \cmidrule(lr){3-5} \cmidrule(lr){6-8}  \cmidrule(lr){9-11}
   & $(N,T)$ &$\tilde{\beta}(\tau)$&$\hat{\beta}_{abc}(\tau)$&$\hat{\beta}_{spj}(\tau)$
    &$\tilde{\beta}(\tau)$&$\hat{\beta}_{abc}(\tau)$&$\hat{\beta}_{spj}(\tau)$
    &$\tilde{\beta}(\tau)$&$\hat{\beta}_{abc}(\tau)$&$\hat{\beta}_{spj}(\tau)$\cr
    \midrule
  $\tau =0.25$ & (100,10)&0.077&-0.019&0.013&0.061&0.065&0.110&0.895&0.898&0.807\cr
   & (100,20)&0.036&-0.011&-0.001&0.026&0.028&0.044&0.921&0.922&0.832\cr
    &(200,10)&0.076&-0.011&0.006&0.032&0.029&0.048&0.905&0.931&0.838\cr
    &(200,20)&0.044&0.004&0.003&0.015&0.014&0.023&0.922&0.933&0.844\cr
    &(1000,10)&0.075&0.013&0.003&0.014&0.006&0.010&0.716&0.924&0.841\cr
    &(1000,20)&0.041&0.013&0.002&0.005&0.003&0.004&0.798&0.921&0.867\cr
    \bottomrule
$\tau =0.90$   & (100,10)&-0.059&0.006&-0.001&0.116&0.102&0.186&0.866&0.888&0.775\cr
   & (100,20)&-0.032&-0.002&-0.006&0.049&0.044&0.077&0.897&0.915&0.831\cr
   & (200,10)&-0.058&-0.011&-0.002&0.072&0.047&0.079&0.845&0.920&0.838\cr
   & (200,20)&-0.029&0.002&0.000&0.028&0.023&0.036&0.887&0.929&0.846\cr
   & (1000,10)&-0.058&-0.015&-0.002&0.043&0.011&0.016&0.464&0.892&0.847\cr
   & (1000,20)&-0.030&0.003&-0.000&0.013&0.004&0.007&0.641&0.928&0.839\cr
     \bottomrule
    \end{tabular}
    \begin{tablenotes}
        \footnotesize
        \item {\bf Note:} 1000 replications. DGP: $Y_{it}=(\epsilon_{it}-1)+\epsilon_{it}X_{it}+\alpha_i$, $\alpha_i=\gamma(X_{i1}+\dots+X_{iT}+\lambda_i)-\Ex(\alpha_i)$, $X_{it}\sim i.i.d \text{ }Beta(1,1)$, $\lambda_i\sim i.i.d \text{ }N(0,1)$, $\gamma=2$, $\epsilon_{it}\sim i.i.d \text{ }N(2,1)$.
      \end{tablenotes}
    \end{threeparttable}
\end{center}
}

{\small
 \begin{center}
  \begin{threeparttable}
  \caption{Biases, MSEs and Coverage Rates for Model 2}

    \begin{tabular}{ccccccccccc}
    \toprule
    &  &
    \multicolumn{3}{c}{Biases}&\multicolumn{3}{c}{MSEs}&\multicolumn{3}{c}{Coverage Rates ($95\%$)}\cr
    \cmidrule(lr){3-5} \cmidrule(lr){6-8}  \cmidrule(lr){9-11}
    & $(N,T)$ &$\tilde{\beta}(\tau)$&$\hat{\beta}_{abc}(\tau)$&$\hat{\beta}_{spj}(\tau)$
    &$\tilde{\beta}(\tau)$&$\hat{\beta}_{abc}(\tau)$&$\hat{\beta}_{spj}(\tau)$
    &$\tilde{\beta}(\tau)$&$\hat{\beta}_{abc}(\tau)$&$\hat{\beta}_{spj}(\tau)$\cr
    \midrule
  $\tau =0.25$  &(100,10)&0.048&0.012&-0.003&0.028&0.021&0.041&0.889&0.923&0.806\cr
    &(100,20)&0.021&0.009&0.016&0.008&0.007&0.013&0.906&0.923&0.801\cr
    &(200,10)&0.051&0.003&0.003&0.022&0.011&0.021&0.781&0.931&0.781\cr
    &(200,20)&0.021&-0.001&-0.012&0.005&0.003&0.007&0.872&0.947&0.796\cr
    &(1000,10)&0.048&0.005&-0.005&0.013&0.002&0.004&0.245&0.934&0.792\cr
    &(1000,20)&0.021&-0.001&-0.012&0.002&0.001&0.002&0.526&0.936&0.693\cr
    \bottomrule
 $\tau =0.90$    &(100,10)&-0.048&-0.006&-0.011&0.236&0.242&0.372&0.900&0.905&0.833\cr
    &(100,20)&-0.020&0.008&0.002&0.121&0.132&0.188&0.918&0.920&0.862\cr
    &(200,10)&-0.043&0.018&-0.007&0.135&0.128&0.182&0.896&0.901&0.845\cr
    &(200,20)&-0.021&0.010&-0.002&0.063&0.064&0.088&0.918&0.924&0.855\cr
    &(1000,10)&-0.042&0.011&-0.003&0.053&0.027&0.038&0.744&0.910&0.834\cr
    &(1000,20)&-0.021&0.001&-0.000&0.019&0.012&0.017&0.852&0.924&0.872\cr
     \bottomrule
    \end{tabular}
    \begin{tablenotes}
        \footnotesize
        \item {\bf Note:} 1000 replications. DGP: $Y_{it}=(\epsilon_{it}-1)+\epsilon_{it}X_{it}+\alpha_i$, $\alpha_i=\gamma(X_{i1}+\dots+X_{iT}+\lambda_i)-\Ex(\alpha_i)$, $X_{it}\sim i.i.d \text{ }Beta(1,1)$, $\lambda_i\sim i.i.d \text{ }N(0,1)$, $\gamma=2$, $\epsilon_{it}\sim i.i.d \text{ }\exp(1)+2$.
      \end{tablenotes}
    \end{threeparttable}
\end{center}
}

{\small
 \begin{center}
  \begin{threeparttable}
  \caption{Biases, MSEs and Coverage Rates for Model 3}

    \begin{tabular}{cccccccccccc}
    \toprule
    &  &
    \multicolumn{3}{c}{Biases}&\multicolumn{3}{c}{MSEs}&\multicolumn{3}{c}{Coverage Rates ($95\%$)}\cr
    \cmidrule(lr){3-5} \cmidrule(lr){6-8}  \cmidrule(lr){9-11}
   & $(N,T)$ &$\tilde{\beta}(\tau)$&$\hat{\beta}_{abc}(\tau)$&$\hat{\beta}_{spj}(\tau)$
    &$\tilde{\beta}(\tau)$&$\hat{\beta}_{abc}(\tau)$&$\hat{\beta}_{spj}(\tau)$
    &$\tilde{\beta}(\tau)$&$\hat{\beta}_{abc}(\tau)$&$\hat{\beta}_{spj}(\tau)$\cr
    \midrule
   $\tau =0.25$ &(100,10)&0.056&0.115&-0.123&0.192&0.251&0.345&0.953&0.926&0.847\cr
    &(100,20)&0.004&-0.036&-0.027&0.077&0.085&0.088&0.950&0.933&0.917\cr
    &(200,10)&0.035&-0.009&-0.111&0.089&0.106&0.166&0.957&0.935&0.833\cr
    &(200,20)&-0.007&0.031&-0.021&0.036&0.041&0.042&0.948&0.932&0.911\cr
     &(1000,10)&0.021&0.004&-0.105&0.018&0.021&0.046&0.941&0.926&0.734\cr
    &(1000,20)&-0.001&-0.003&-0.012&0.007&0.008&0.008&0.931&0.917&0.913\cr
    \bottomrule
   $\tau =0.90$ & (100,10)&-0.057&0.002&-0.014&0.057&0.029&0.071&0.830&0.940&0.786\cr
    &(100,20)&-0.034&-0.006&-0.002&0.020&0.009&0.021&0.776&0.936&0.775\cr
    &(200,10)&-0.056&-0.022&-0.015&0.045&0.019&0.033&0.671&0.906&0.800\cr
    &(200,20)&-0.032&0.001&-0.001&0.015&0.005&0.010&0.622&0.936&0.782\cr
     &(1000,10)&-0.058&-0.022&-0.016&0.039&0.008&0.010&0.027&0.724&0.698\cr
    &(1000,20)&-0.033&-0.005&-0.001&0.013&0.001&0.002&0.028&0.902&0.775\cr
     \bottomrule
    \end{tabular}
    \begin{tablenotes}
        \footnotesize
        \item {\bf Note:} 1000 replications. DGP: $Y_{it}=(\epsilon_{it}-1)+\epsilon_{it}X_{it}+\alpha_i$, $\alpha_i=\gamma(X_{i1}+\dots+X_{iT}+\lambda_i)-\Ex(\alpha_i)$, $X_{it}\sim i.i.d \text{ }Beta(1,1)$, $\lambda_i\sim i.i.d \text{ }N(0,1)$, $\gamma=2$, $\epsilon_{it} \sim  i.i.d \text{ }B_{it}\cdot \mathcal{N}(1,1) +(1-B_{it})\cdot\mathcal{N}(3,1)$ with $B_{it} \sim Bernoulli(0.3)$.
      \end{tablenotes}
    \end{threeparttable}
\end{center}
}

{\small
 \begin{center}
  \begin{threeparttable}
  \caption{Biases, MSEs and Coverage Rates for Model 4}

    \begin{tabular}{cccccccccccc}
    \toprule
    &  &
    \multicolumn{3}{c}{Biases}&\multicolumn{3}{c}{MSEs}&\multicolumn{3}{c}{Coverage Rates ($95\%$)}\cr
    \cmidrule(lr){3-5} \cmidrule(lr){6-8}  \cmidrule(lr){9-11}
   & $(N,T)$ &$\tilde{\beta}(\tau)$&$\hat{\beta}_{abc}(\tau)$&$\hat{\beta}_{spj}(\tau)$
    &$\tilde{\beta}(\tau)$&$\hat{\beta}_{abc}(\tau)$&$\hat{\beta}_{spj}(\tau)$
    &$\tilde{\beta}(\tau)$&$\hat{\beta}_{abc}(\tau)$&$\hat{\beta}_{spj}(\tau)$\cr
    \midrule
   $\tau =0.25$ &(100,10)&-0.166&-0.009&-0.005&0.083&0.086&0.143&0.943&0.936&0.863\cr
    &(100,20)&-0.090&0.058&-0.088&0.038&0.040&0.058&0.937&0.942&0.869\cr
    &(200,10)&-0.180&-0.089&-0.050&0.050&0.045&0.065&0.912&0.927&0.862\cr
    &(200,20)&-0.104&0.015&-0.025&0.022&0.019&0.030&0.926&0.943&0.871\cr
     &(1000,10)&-0.175&-0.014&-0.033&0.023&0.008&0.014&0.728&0.944&0.849\cr
    &(1000,20)&-0.093&-0.001&-0.010&0.008&0.004&0.006&0.812&0.941&0.866\cr
    \bottomrule
   $\tau =0.90$ & (100,10)&-0.172&-0.040&-0.024&0.203&0.179&0.296&0.906&0.937&0.852\cr
    &(100,20)&-0.101&-0.011&-0.017&0.093&0.083&0.136&0.917&0.943&0.853\cr
    &(200,10)&-0.165&-0.077&-0.029&0.129&0.102&0.158&0.884&0.923&0.842\cr
    &(200,20)&-0.085&-0.033&-0.004&0.052&0.044&0.067&0.895&0.933&0.864\cr    
     &(1000,10)&-0.169&-0.009&-0.029&0.077&0.019&0.033&0.518&0.948&0.844\cr
    &(1000,20)&-0.088&-0.003&-0.008&0.024&0.008&0.012&0.688&0.944&0.884\cr
     \bottomrule
    \end{tabular}
    \begin{tablenotes}
        \footnotesize
        \item {\bf Note:} 1000 replications. DGP: $Y_{it}=(\epsilon_{it}-1)+\epsilon_{it}X_{it}+\alpha_i$, $\alpha_i=\gamma(X_{i1}+\dots+X_{iT}+\lambda_i)-\Ex(\alpha_i)$, $X_{it}\sim i.i.d \text{ }Beta(1,1)$, $\lambda_i\sim i.i.d \text{ }N(0,1)$, $\gamma=2$, $\epsilon_{it} \sim  i.i.d \text{ }\mathcal{T}(5)$.
      \end{tablenotes}
    \end{threeparttable}
\end{center}
}

There are three main takeaways from the simulation results.

First, compared with the estimator of \cite{canay2011simple}, both analytical bias correction and split-panel jackknife can significantly reduce the bias of our two-step estimator in most cases. The only exception is Model 3 at $\tau=0.25$, where the bias of \cite{canay2011simple}'s estimator is already very small. In particular, in all models, for fixed $T$, the biases of \cite{canay2011simple}'s estimator does not decease as $N$ increases from $100$ to $1000$, confirming the existence of a bias term whose size is determined by $T$ only.

Second, in most cases, the MSE of $\hat{\beta}_{abc}(\tau)$ is the lowest while the MSE of $\hat{\beta}_{spj}(\tau)$ is the highest, implying that although split-panel jackknife is able to reduce the bias it also increases the variance notably.

Third, in many cases, the coverage rates based on $\tilde{\beta}$ is close to the nominal level ($95\%$) when $N$ is not large compared to $T$. However, when $N=1000$ and $T=10,20$, the coverage rates based on $\tilde{\beta}$ is much lower than the nominal level. The most extreme case is in Model 3 at $\tau=0.9$, where the coverage rates based on $\tilde{\beta}$ is less than $3\%$ when $N=1000$. On the other hand, the coverage rates based on $\hat{\beta}_{abc}(\tau)$ is close to the nominal level in almost all cases. It should be noted that the coverage rates based on $\hat{\beta}_{spj}(\tau)$ perform better than $\tilde{\beta}$ when $N$ is large, but in general they are not close to the nominal level due to the relatively high variances of $\hat{\beta}_{spj}(\tau)$.

To sum up, the simulation results above confirm our claim that failing to take into account the asymptotic bias of \cite{canay2011simple}'s estimator will lead to invalid inference especially when $N$ is large and $T$ is small. Our new two-step estimator with analytical bias correction is shown to perform the best in terms of bias correction and the coverage rates of the confidence intervals.

\section{Conclusion}
In this paper, we revisit the popular two-step estimator of \cite{canay2011simple} for quantile panel data models, and explain why the inference based on \cite{canay2011simple}'s estimators of the coefficients and the covariance matrix are not valid. Solving this important problem is crucial for correctly interpreting the empirical findings in more than 80 papers that have employed \cite{canay2011simple}'s estimator.

We propose a new two-step estimator based on smoothed quantile regressions, and establish the asymptotic distribution of the new estimator. In particular, we derive the analytical expression for the asymptotic bias which provides the basis for both analytical and split-panel jackknife bias corrections. In addition, we provide a more accurate characterization of the asymptotic covariance matrix, which is crucial for constructing asymptotically valid confidence intervals. The performance of the new estimator with bias corrections in finite samples is evaluated by Monte Carlo simulations. We find that correcting the asymptotic bias is essential to obtain valid inference in quantile panel data models.

Even though we have provided conditions regarding the size of the bandwidth parameter in the smooth quantile regression, there remains the important question of how to choose the bandwidth parameter optimally in a data-dependent manner. Such an interesting question is left for future research.

%
%
%
%

\appendix

\numberwithin{equation}{section}

\newpage
\section{Proof of the Main Results}

\begin{lemma} Under Assumption 1,\\
(i)\[ \sqrt{NT}( \hat{\theta} -\theta_0) = \mathbf{B}^{-1}\cdot \frac{1}{\sqrt{NT}}\sum_{i=1}^{N}\sum_{t=1}^{T}\ddot{X}_{it}\ddot{\epsilon}_{it}+o_P(1), \]
and \[ \frac{1}{\sqrt{NT}}\sum_{i=1}^{N}\sum_{t=1}^{T}\ddot{X}_{it}\ddot{\epsilon}_{it} =\mathcal{N}(0,\Sigma_{\theta}) +o_P(\sqrt{N}/\sqrt{T}). \]
(ii)
\[ \AverageN |\hat{\alpha}_i -\alpha_{0i} |^2 =O_P\left(\frac{1}{T}\right).\]
\end{lemma}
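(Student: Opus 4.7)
My plan is to treat the two parts sequentially, leveraging the fact that after subtracting $\alpha_{0i}$ from both sides of model \eqref{model1}, we have the standard linear panel representation $Y_{it} = \theta_0'X_{it} + \alpha_{0i} + \epsilon_{it}$ with $\Ex[\epsilon_{it}|X_{it}]=0$ and finite fourth moments (the latter being derivable from Assumptions 1(iii) and (v) via \eqref{eq8}), and that Assumption 1(i) gives cross-sectional independence and time-stationarity (in particular, serial independence when combined with the i.i.d.\ statement).

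\textbf{Part (i).} I would start from the closed-form within estimator, yielding the identity
\[
\hat{\theta}-\theta_0 \;=\;\Bigl(\sum_{i,t}\ddot{X}_{it}\ddot{X}_{it}'\Bigr)^{-1}\sum_{i,t}\ddot{X}_{it}\ddot{\epsilon}_{it}.
\]
The first subclaim follows once I show $(NT)^{-1}\sum \ddot{X}_{it}\ddot{X}_{it}' \overset{p}{\to}\mathbf{B}$. I would decompose $\ddot X_{it}=\tilde X_{it}-(\bar X_i-\mu_i)$, apply the cross-sectional LLN to the dominant term $\tilde X_{it}\tilde X_{it}'$, and bound the remaining cross-products using $\Ex\|\bar X_i-\mu_i\|^2=O(1/T)$ (serial independence plus bounded second moments), invoking Assumption 1(vi) for positive definiteness of $\mathbf{B}$. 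This gives the first displayed equation after applying Slutsky.

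For the second subclaim I would exploit the algebraic identity $\sum_t \ddot X_{it}\ddot\epsilon_{it}=\sum_t X_{it}\epsilon_{it}-T\bar X_i\bar\epsilon_i$ and then recenter $X_{it}$ at $\mu_i$, giving
\[
\frac{1}{\sqrt{NT}}\sum_{i,t}\ddot X_{it}\ddot\epsilon_{it}=\frac{1}{\sqrt{NT}}\sum_{i,t}\tilde X_{it}\epsilon_{it}-\sqrt{T/N}\sum_i(\bar X_i-\mu_i)\bar\epsilon_i.
\]
A Lindeberg-Feller CLT applied to the i.i.d.-across-$i$, serially independent triangular array $\tilde X_{it}\epsilon_{it}$ (mean zero by $\Ex[\epsilon_{it}|X_{it}]=0$, variance $\Sigma_\theta$ by definition, fourth moments finite) delivers the $\mathcal{N}(0,\Sigma_\theta)$ limit. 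For the remainder, I would verify that $\Ex[(\bar X_i-\mu_i)\bar\epsilon_i]=0$ (using serial independence and the conditional mean zero condition), so that cross-sectional independence gives $\Var(\sum_i(\bar X_i-\mu_i)\bar\epsilon_i)=O(N/T^2)$ by Cauchy-Schwarz with the fourth moment bounds; thus the remainder is $O_P(\sqrt{N}/T)$, which multiplied by $\sqrt{T/N}$ yields $O_P(1/\sqrt{T})=o_P(\sqrt{N/T})$.

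\textbf{Part (ii).} I would use the explicit form $\hat\alpha_i-\alpha_{0i}=\bar\epsilon_i-(\hat\theta-\theta_0)'\bar X_i$, so that
\[
\frac{1}{N}\sum_{i=1}^N|\hat\alpha_i-\alpha_{0i}|^2\leq \frac{2}{N}\sum_i\bar\epsilon_i^2+2\|\hat\theta-\theta_0\|^2\cdot\frac{1}{N}\sum_i\|\bar X_i\|^2.
\]
The first term has mean $O(1/T)$ by serial independence of $\{\epsilon_{it}\}_t$ and is thus $O_P(1/T)$ by Markov. For the second term, part (i) together with the CLT already shows $\|\hat\theta-\theta_0\|^2=O_P(1/(NT))$, and $N^{-1}\sum_i\|\bar X_i\|^2$ is $O_P(1)$ after decomposing $\bar X_i=\mu_i+(\bar X_i-\mu_i)$ and using Assumption 1(iii) together with the $O(1/T)$ bound on $\Ex\|\bar X_i-\mu_i\|^2$; hence the second term is $O_P(1/(NT))=o_P(1/T)$.

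\textbf{Main obstacle.} The routine-but-delicate step is part (i), specifically the uniform-in-$(N,T)$ control of the remainder $\sqrt{T/N}\sum_i(\bar X_i-\mu_i)\bar\epsilon_i$: one must carefully exploit both cross-sectional independence and serial independence to avoid picking up a bias that would spoil the $o_P(\sqrt{N/T})$ rate. Everything else reduces to standard LLN/CLT bookkeeping under the fourth-moment and stationarity conditions of Assumption 1.
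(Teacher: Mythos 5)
Your proposal is correct and follows essentially the same route as the paper: the closed-form within-estimator identity plus a law of large numbers for $(NT)^{-1}\sum_{i,t}\ddot{X}_{it}\ddot{X}_{it}'\to\mathbf{B}$, the decomposition $\sum_t\ddot{X}_{it}\ddot{\epsilon}_{it}=\sum_t\tilde{X}_{it}\epsilon_{it}-T(\bar{X}_i-\mu_i)\bar{\epsilon}_i$ with a CLT on the leading term and a mean-zero/variance bound on the remainder, and for part (ii) the identity $\hat{\alpha}_i-\alpha_{0i}=\bar{\epsilon}_i-(\hat{\theta}-\theta_0)'\bar{X}_i$ combined with $N^{-1}\sum_i\|\bar{X}_i\|^2=O_P(1)$ and $N^{-1}\sum_i\bar{\epsilon}_i^2=O_P(T^{-1})$. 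Your explicit variance computation for the remainder, giving $O_P(T^{-1/2})$ after scaling, is in fact slightly sharper than the paper's $o_P(\sqrt{N}/\sqrt{T})$ statement but is the same argument in substance.
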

\begin{proof}
(i) Write
\[ \sqrt{NT}( \hat{\theta} -\theta_0)  = \left(  \frac{1}{NT}\sum_{i=1}^{N}\sum_{t=1}^{T}\ddot{X}_{it}\ddot{X}_{it}'\right)^{-1}\cdot \frac{1}{\sqrt{NT}}\sum_{i=1}^{N}\sum_{t=1}^{T}\ddot{X}_{it}\ddot{\epsilon}_{it} .\]
To save space, we only consider the case where $k=1$. Note that
\[ \Ex[\ddot{X}_{it}^2 ] = \Ex[ X_{it}^2] - 2\Ex[X_{it}\bar{X}_i ]+ \Ex[\bar{X}_i^2 ],\]
and
\[ \Ex[X_{it}\bar{X}_i ] =\Ex[X_{it}\mu_i ] + \Ex[X_{it}(\bar{X}_i-\mu_i) ] =\mu_i^2 + O(T^{-1/2})\]
because
\[|  \Ex[X_{it}(\bar{X}_i-\mu_i) ] | \leq  \frac{1}{\sqrt{T}}\sqrt{\Ex[X_{it}^2]} \cdot \sqrt{  \Ex \left[ \frac{1}{\sqrt{T}}\sum_{t=1}^{T}(X_{it}-\mu_i) \right]^2  }.\]
Similarly, we can show that $\Ex[\bar{X}_i^2 ] = \mu_i^2 + O(T^{-1})$. Thus, it follows that
\[ \Ex[\ddot{X}_{it}^2 ] = \Ex[X_{it}^2 ] -\mu_i^2 +o(1)= \Ex[ (X_{it} -\mu_i)^2]+o(1).\]
Next, define $V_{it} = \ddot{X}_{it}^2 -\Ex[ \ddot{X}_{it}^2]$, we can write
\[\frac{1}{NT}\sum_{i=1}^{N}\sum_{t=1}^{T}\ddot{X}_{it}^2 =\frac{1}{N}\sum_{i=1}^{N}\Ex[ (X_{it} -\mu_i)^2]+\frac{1}{NT}\sum_{i=1}^{N}\sum_{t=1}^{T}V_{it} +o(1).\]
Note that $\Ex[V_{it}]=0$ and that
\[  \Ex\left[\frac{1}{NT}\sum_{i=1}^{N}\sum_{t=1}^{T}V_{it}\right]^2 = \frac{1}{N^2T^2}\sum_{i=1}^{N}\sum_{t=1}^{T}\sum_{s=1}^{T}\Ex[ V_{it}V_{is}] \leq \frac{1}{N^2} \sum_{i=1}^{N}  \left( \frac{1}{T}\sum_{t=1}^{T} \sqrt{ \Ex[ V_{it}^2]} \right)^2=O(N^{-1}), \]
it follows that
\[ \frac{1}{NT}\sum_{i=1}^{N}\sum_{t=1}^{T}\ddot{X}_{it}^2 =\frac{1}{N}\sum_{i=1}^{N}\Ex[ (X_{it} -\mu_i)^2]+o_P(1)=\mathbf{B}+o_P(1).\]
Next,
\[
  \frac{1}{\sqrt{NT}}\sum_{i=1}^{N}\sum_{t=1}^{T}\ddot{X}_{it}\ddot{\epsilon}_{it}
= \frac{1}{\sqrt{NT}}\sum_{i=1}^{N}\sum_{t=1}^{T}\tilde{X}_{it}\epsilon_{it} -  \frac{1}{\sqrt{NT}}\sum_{i=1}^{N}\sum_{t=1}^{T}\tilde{X}_{it}\bar{\epsilon}_{i}.
\]
It is easy to see that $(NT)^{-1/2}\sum_{i=1}^{N}\sum_{t=1}^{T}\tilde{X}_{it}\epsilon_{it} \overset{d}{\rightarrow}\mathcal{N}(0,\Sigma_{\theta})$. Moreover,
\[\frac{1}{\sqrt{NT}}\sum_{i=1}^{N}\sum_{t=1}^{T}\tilde{X}_{it}\bar{\epsilon}_{i} = \frac{\sqrt{N}}{\sqrt{T}} \cdot \frac{1}{N}\sum_{i=1}^{N}
\left( \frac{1}{\sqrt{T}} \sum_{t=1}^{T} \tilde{X}_{it }\right)\cdot \left( \frac{1}{\sqrt{T}} \sum_{t=1}^{T}\epsilon_{it}\right).
 \]
It can be shown that
\[ \Ex\left[\left( \frac{1}{\sqrt{T}} \sum_{t=1}^{T} \tilde{X}_{it }\right)\cdot \left( \frac{1}{\sqrt{T}} \sum_{t=1}^{T}\epsilon_{it}\right)\right]=0 \]
and
\[\Ex  \left[\left( \frac{1}{\sqrt{T}} \sum_{t=1}^{T} \tilde{X}_{it }\right)\cdot \left( \frac{1}{\sqrt{T}} \sum_{t=1}^{T}\epsilon_{it}\right)\right]^2=o(1) ,\]
thus,
\[\frac{1}{\sqrt{NT}}\sum_{i=1}^{N}\sum_{t=1}^{T}\tilde{X}_{it}\bar{\epsilon}_{i} = \frac{\sqrt{N}}{\sqrt{T}} \cdot o_P(1) = o_P(\sqrt{N}/\sqrt{T})\]
and the desired result follows.

(ii) Since $\alpha_{0i} = \bar{Y}_i - \theta_0'\bar{X}_i - \bar{\epsilon}_i$, we have $\hat{\alpha}_i -\alpha_{0i} = (\hat{\theta} -\theta_0)'\bar{X}_i + \bar{\epsilon}_i $, and thus
\[ \AverageN  |\hat{\alpha}_i -\alpha_{0i} |^2 \lesssim  \| \hat{\theta} -\theta_0\|^2 \cdot \AverageN \| \bar{X}_i\|^2 +\AverageN | \bar{\epsilon}_i|^2 \]
First, Assumption 2 implies that
\[ \Ex \| \bar{X}_i\|^2=\Ex \left\| \frac{1}{T}\sum_{t=1}^{T}X_{it} \right\|^2 \leq  \frac{1}{T}\sum_{t=1}^{T} \Ex\|X_{it}\|^2 \leq C \]
for all $i\leq N$. Thus, $N^{-1}\sum_{i=1}^{N} \| \bar{X}_i\|^2 =O_P(1)$.

Second, Assumption 2 implies that
\[ \Ex  | \sqrt{T} \bar{\epsilon}_i |^2  =  \Ex \left| \frac{1}{\sqrt{T}}\sum_{t=1}^{T}\epsilon_{it} \right|^2  \leq C\]
for all $i\leq N$, it then follows that $ N^{-1}\sum_{i=1}^{N} | \bar{\epsilon}_i|^2 = O_P(T^{-1})$. Then the desired result follows because the first result of this Lemma implies that
\[ \hat{\theta} -\theta_0 = O_P\left( \frac{1}{\sqrt{NT}}\right)+o_P\left(\frac{1}{T}\right).\]
\end{proof}

\noindent{\textbf{Proof of Theorem 1:}}
\begin{proof} To simplify the notations, write $\beta_0=\beta(\tau), \hat{\beta} =\hat{\beta}(\tau)$.
Define
\[\mathbb{M}_{NT}(\beta) =\AverageNT\rho_{\tau}(Y_{it} -\beta'W_{it}-\alpha_{0i}), \quad
\mathbb{S}_{NT}(\beta ) =\AverageNT \varrho_{\tau}(Y_{it} -\beta'W_{it}-\alpha_{0i}),
 \]
 \[ \bar{\mathbb{M}}_{NT}(\beta) =\AverageNT\Ex[  \rho_{\tau}(Y_{it} -\beta'W_{it}-\alpha_{0i})], \quad
\hat{\mathbb{S}}_{NT}(\beta ) =\AverageNT  \varrho_{\tau}(Y_{it} -\beta'W_{it} - \hat{\alpha}_{0i}), \]
\[ \mathbb{W}_{NT}(\beta ) = \mathbb{M}_{NT}(\beta ) -\mathbb{M}_{NT}(\beta_0 ) - [ \bar{\mathbb{M}}_{NT}(\beta ) - \bar{\mathbb{M}}_{NT}(\beta_0 )] ,\]
 where $\varrho_{\tau}(u) = [\tau - K(u/h)]u.$

 First, for sufficiently small $\delta>0$, let $B(\delta) =\{ b\in\mathbb{R}^d: \|b-\beta_0\|\leq \delta\}$ be a neighbourhood of $\beta_0$. For any $\bar{\beta} \in B^C(\delta)$, define $r= \delta/\|\bar{\beta}-\beta_0\|$, then the point $\beta^{
\ast} =r \bar{\beta} + (1-r) \beta_0$ is on the boundary of $B(\delta)$ because $\|\beta^{\ast} -\beta_0 \| =r \| \bar{\beta} -\beta_0\| =\delta$. By convexity of the check function, we have
\begin{equation*} r \rho_{\tau}(Y_{it} -\bar{\beta} 'W_{it}-\alpha_{0i})+(1-r)  \rho_{\tau}(Y_{it} -\beta_0 'W_{it}-\alpha_{0i}) \geq \rho_{\tau}(Y_{it} -\beta^{\ast'} W_{it}-\alpha_{0i}),\end{equation*}
or
\begin{equation}\label{A1} r \left[ \rho_{\tau}(Y_{it} -\bar{\beta} 'W_{it}-\alpha_{0i})-\rho_{\tau}(Y_{it} -\beta_0 'W_{it}-\alpha_{0i})  \right]
\geq \rho_{\tau}(Y_{it} -\beta^{\ast'} W_{it}-\alpha_{0i}) - \rho_{\tau}(Y_{it} -\beta_0 'W_{it}-\alpha_{0i}).
\end{equation}

Second, Assumption 1 implies that for some $\underline{c}>0$,
\begin{equation}\label{A2}
 \Ex[\rho_{\tau}(Y_{it} -\beta^{\ast'} W_{it}-\alpha_{0i})] - \Ex[ \rho_{\tau}(Y_{it} -\beta_0 'W_{it}-\alpha_{0i}) ] \geq  \underline{c}\|\beta^{\ast'} - \beta_0 \|^2 = \underline{c} \cdot \delta^2.
 \end{equation}

Third, by definition $ \hat{\mathbb{S}}_{NT}(\hat{\beta} ) \leq  \hat{\mathbb{S}}_{NT}(\beta_0 )$, and adding and subtracting terms give
\begin{multline}\label{A3} \mathbb{M}_{NT}(\hat{\beta} ) -  \mathbb{M}_{NT}(\beta_0)     \leq
 \underbrace{ \left(   \mathbb{M}_{NT}(\hat{\beta} )  -  \mathbb{S}_{NT}(\hat{\beta})  \right) }_{I} -  \underbrace{ \left(  \mathbb{M}_{NT}(\beta_0 ) -   \mathbb{S}_{NT}(\beta_0 ) \right) }_{II}  \\
+ \underbrace{ \left(  \mathbb{S}_{NT}(\hat{\beta} )  -  \hat{\mathbb{S}}_{NT}(\hat{\beta}) \right)}_{III }-   \underbrace{ \left(\mathbb{S}_{NT}(\beta_0)  -\hat{ \mathbb{S}}_{NT}(\beta_0)   \right) }_{IV}.
  \end{multline}

Next, suppose that $\| \hat{\beta} -\beta_0 \| >\delta$, it follows from \eqref{A1} and \eqref{A2} that
\begin{equation}\label{A4}
\underline{c}/r \cdot \delta^2  \leq  \mathbb{M}_{NT}(\hat{\beta} ) -  \mathbb{M}_{NT}(\beta_0 )   + \sup_{ \beta \in B(\delta)} \left\| \mathbb{W}_{NT}(\beta )  \right\|.
\end{equation}
It then follows from \eqref{A3} and \eqref{A4} that
\begin{equation}\label{A5}
P[ \| \hat{\beta} -\beta_0 \| >\delta ] \leq P \left[  |I| +|II|+|III|+|IV| +\sup_{ \beta \in B(\delta)} \left\| \mathbb{W}_{NT}(\beta )  \right\|  \geq  \underline{c}/r \cdot \delta^2 \right].
\end{equation}
It is easy to see that $I$ and $II$ are both $O_P(h)$, and that (using the results of Lemma 1)
\[ |III|+|IV| \lesssim  \AverageN  |\hat{\alpha}_i -\alpha_{0i}| \leq \sqrt{ \AverageN  |\hat{\alpha}_i -\alpha_{0i}|^2} =o_P(1),\]
then the desired result follows if
\begin{equation}\label{A6}
P\left[ \sup_{ \beta \in B(\delta)} \left\| \mathbb{W}_{NT}(\beta )  \right\| >c \right] =o(1) \text{ for any }c>0.
\end{equation}
For any $\epsilon>0$, let $\beta^{(1)},\ldots, \beta^{(m)}$ be a maximal set in $B(\delta)$ such that $ \| \beta^{(l)} - \beta^{(k)}\|\geq\epsilon $ for any $l\neq k$ and $1\leq l,k\leq m$. By the compactness of $B(\delta)$, $m$ is finite. For any $\beta \in B(\delta)$, define $\beta^{\ast} =\{ \beta^{(l)}: 1\leq l\leq m, \| \beta-\beta^{\ast}\|\leq \epsilon \}$. Thus, we can write $\mathbb{W}_{NT}(\beta ) = \mathbb{W}_{NT}(\beta^{\ast} ) + \mathbb{W}_{NT}(\beta)  -\mathbb{W}_{NT}(\beta^{\ast})$ and it follows that
\begin{equation}\label{A7}
  \sup_{ \beta \in B(\delta)} \left\| \mathbb{W}_{NT}(\beta)  \right\| \leq
\max_{1\leq l \leq m}  \left\| \mathbb{W}_{NT}(\beta^{(l)} )  \right\| +
\sup_{ \beta \in B(\delta)} \left\|  \mathbb{W}_{NT}(\beta )  - \mathbb{W}_{NT}(\beta^{\ast})  \right\| .
\end{equation}
For the second term on the RHS of \eqref{A7}, it is easy to show that
\[
\sup_{ \beta \in B(\delta)} \left\|  \mathbb{W}_{NT}(\beta)  - \mathbb{W}_{NT}(\beta^{\ast})   \right\|
\lesssim  \epsilon \cdot  \frac{1}{NT}\sum_{i=1}^{N}\sum_{t=1}^{T} \|W_{it}\| \leq  \epsilon \cdot ( C +  o_P(1))
 \]
 because $ | \rho_{\tau}(Y_{it} -\beta_1'W_{it} -\alpha_{0i}) -  \rho_{\tau}(Y_{it} -\beta_2'W_{it} -\alpha_{0i}) | \lesssim \| W_{it}\| \cdot \|\beta_1-\beta_2\|$, and Assumption 1 implies that $\Ex\|W_{it}\| \leq C$ for all $i,t$ and that $(NT)^{-1}\sum_{i=1}^{N}\sum_{t=1}^{T} \left[  \|W_{it}\| -\Ex\|W_{it}\|  \right] =o_P(1)$. Similarly, it can be shown that the first term on the RHS of \eqref{A7} is $o_P(1)$. Thus, \eqref{A6} follows since $\epsilon$ is arbitrary, and this concludes the proof. \end{proof}

 Define
 \[  \varrho^{(1)}(u) = \tau - K(u/h) + k(u/h)u/h,\quad \varrho^{(2)}(u) = 2k(u/h)1/h + k^{(1)}(u/h)u/h^2 \]
\[ \varrho^{(3)}(u) = 3k^{(1)}(u/h)1/h^2 + k^{(2)}(u/h)u/h^3, \quad \varrho^{(4)}(u) = 4k^{(2)}(u/h)1/h^3 + k^{(3)}(u/h)u/h^4.\]

Write $\varrho^{(1)}_{it} =\varrho^{(1)}(u_{it})$, $\varrho^{(2)}_{it} =\varrho^{(2)}(u_{it})$, $\varrho^{(3)}_{it} =\varrho^{(3)}(u_{it})$.
\begin{lemma}Let $\Delta(\alpha_{0i})$ and $\Delta(\beta_0)$ be neighbourhoods of $\alpha_{0i}$ and $\beta_0$. Under Assumptions 1 and 2, we have\\
(i) \[  \max_{1\leq i\leq N} \sup_{\alpha_i \in \Delta(\alpha_{0i}),\beta\in\Delta(\beta_0) } \left| \frac{1}{T}\sum_{t=1}^{T} \varrho^{(2)}(Y_{it} -\beta'W_{it} -\alpha) -\Ex[ \varrho^{(2)}(Y_{it} -\beta'W_{it} -\alpha) ]\right| =o_P\left( \frac{\log N}{ \sqrt{Th}}\right), \]
\[  \max_{1\leq i\leq N} \sup_{\alpha_i \in \Delta(\alpha_{0i}),\beta\in\Delta(\beta_0) } \left| \frac{1}{T}\sum_{t=1}^{T} \varrho^{(3)}(Y_{it} -\beta'W_{it} -\alpha) -\Ex[ \varrho^{(3)}(Y_{it} -\beta'W_{it} -\alpha) ]\right| =o_P\left( \frac{\log N}{ \sqrt{Th^3}}\right) \]
(ii) $\Ex[\varrho^{(1)}_{it}|X_{it} ]=O(h^q)$, $\Ex[\varrho^{(2)}_{it}W_{it} ]=\gamma_i+O(h^q)$, $\Ex[\varrho^{(3)}_{it}W_{it} ]=\eta_i+O(h^{q-1})$, and
\[ \sup_{\alpha_i \in \Delta(\alpha_{0i}),\beta\in\Delta(\beta_0) } \left\| \Ex[\varrho^{(4)}(Y_{it} -\beta'W_{it} -\alpha ) W_{it} ] \right\| =O(1). \]
\end{lemma}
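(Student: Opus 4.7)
Plan for part (ii). Each of the four identities is a higher-order kernel-expansion computation. For every expectation I perform the change of variables $u=hv$ so that $k$ and its derivatives are evaluated on $[-1,1]$, and then Taylor-expand $f_{it}(hv\mid X_{it})$ (or $F_{it}(hv\mid X_{it})$, where the quantile restriction $F_{it}(0\mid X_{it})=\tau$ is used) around $0$ to the relevant order. Assumption~2(iii) kills every low-order Taylor term: the kernel has unit integral and vanishing moments of orders $1,\ldots,q-1$. For $\Ex[\varrho^{(1)}_{it}\mid X_{it}]$ I split $\varrho^{(1)}$ into $\tau-K(u/h)$ and $k(u/h)u/h$; integration by parts converts the first piece into $\tau-\int k(v)F_{it}(hv\mid X_{it})dv$, and the quantile restriction cancels the constant, leaving $O(h^q)$. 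The second piece equals $h\int k(v)vf_{it}(hv\mid X_{it})dv$, whose first nonvanishing contribution is at order $h^q$ because the first nonzero moment of $vk(v)$ is $\int v^q k$. The identities for $\Ex[\varrho^{(2)}_{it}W_{it}]$ and $\Ex[\varrho^{(3)}_{it}W_{it}]$ are of the same flavour: after $u=hv$, one or two integration-by-parts steps (using $k(\pm 1)=0$ and, for $\varrho^{(3)}$, $k^{(1)}(\pm 1)=0$, both of which hold for the fourth-order kernel used in the paper) convert derivatives of $k$ into derivatives of $f_{it}$, and the kernel-moment conditions then produce $\gamma_i+O(h^q)$ and $\eta_i+O(h^{q-1})$. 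The weaker rate $h^{q-1}$ for $\varrho^{(3)}$ is because one factor of $1/h$ remains unabsorbed by the integration by parts. Finally, $\sup_{\alpha,\beta}\|\Ex[\varrho^{(4)}(Y_{it}-\beta'W_{it}-\alpha)W_{it}]\|=O(1)$ is immediate: $\varrho^{(4)}$ is supported on $|u|\le h$ with magnitude $\lesssim h^{-3}$, so change of variables provides a factor $h$ and Assumption~2(ii) (bounded conditional density and its derivatives) together with compact $X$ leave an $O(1)$ integral uniformly in the local parameters.

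Plan for part (i). This is a uniform concentration statement over $i\le N$ and over the local neighbourhood $\Delta(\alpha_{0i})\times\Delta(\beta_0)$. I begin with envelope and variance bounds at a single point: $|\varrho^{(2)}(u)|\lesssim h^{-1}$ directly from its definition, and the change of variables $u=hv$ combined with bounded conditional densities gives $\Ex[\varrho^{(2)}(Y_{it}-\beta'W_{it}-\alpha)^2]\lesssim h^{-1}$; for $\varrho^{(3)}$ the analogous bounds are $h^{-2}$ and $h^{-3}$. Since $(u_{i1},X_{i1}),\ldots,(u_{iT},X_{iT})$ are i.i.d.\ by Assumption~1(i), Bernstein's inequality at a single $(\alpha,\beta,i)$ yields, with $M\lesssim h^{-1}$, $\sigma^2\lesssim h^{-1}$, and target level $\epsilon=A\log N/\sqrt{Th}$,
\[
P\!\left(\Bigl|T^{-1}\!\!\sum_{t=1}^T\varrho^{(2)}(Y_{it}-\beta'W_{it}-\alpha)-\Ex[\cdot]\Bigr|>\epsilon\right)\le 2\exp\!\left(-cA^2\log^2 N\right),
\]
provided $\log N=o(\sqrt{Th})$; this last condition is ensured by Assumption~2(iv), since $N\asymp T$ and $\sqrt{Th}\asymp T^{(1-c)/2}$ with $c<1/3$. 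I next cover $\Delta(\alpha_{0i})\times\Delta(\beta_0)$ by a grid of side $\delta_{NT}=h^2\log N/\sqrt{Th}$, so that the Lipschitz discretization error $\sup_u|\varrho^{(3)}(u)|\cdot\|W_{it}\|\cdot\delta_{NT}\lesssim h^{-2}\delta_{NT}$ matches $\epsilon$; here $\|W_{it}\|$ is uniformly bounded by Assumption~2(i). The grid has cardinality polynomial in $(N,T)$, so a union bound over the grid and over $i=1,\ldots,N$ gives total failure probability of order $N\cdot\mathrm{poly}(N,T)\cdot\exp(-cA^2\log^2 N)=\exp(-cA^2\log^2 N(1-o(1)))\to 0$, and since $A>0$ is arbitrary the claimed $o_P(\log N/\sqrt{Th})$ rate follows. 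The proof for $\varrho^{(3)}$ is parallel with envelope $h^{-2}$, variance $h^{-3}$, Lipschitz constant $h^{-3}$ (obtained from $\varrho^{(4)}$ via the last claim of part (ii)), and target rate $\log N/\sqrt{Th^3}$.

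The main technical obstacle is the second-moment bookkeeping for the kernel-weighted terms: one must use change of variables and bounded conditional densities to keep $\Ex[(\varrho^{(j)})^2]$ of order $h^{-(2j-3)}$ instead of the naive envelope-squared bound $h^{-2j}$, as the latter would force an unnecessarily stringent condition on $\log N$ relative to $\sqrt{Th}$. A secondary delicate point is that the Lipschitz constant of $\varrho^{(j)}$ explodes as $h^{-(j+1)}$, which forces a very fine grid; but the Bernstein decay $\exp(-c\log^2 N)$ is more than enough to absorb any polynomial grid factor and any polynomial factor in $N$ and $T$ coming from the union bound, so this causes no real difficulty once the variance bounds are correctly established.
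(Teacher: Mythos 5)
Your overall architecture is sound and is essentially the route the paper itself points to: the paper omits this proof entirely, deferring to Lemmas B.1 and B.2 of Galvao and Kato (2016), and those lemmas are proved exactly as you propose --- the uniform concentration in part (i) via an exponential (Bernstein-type) inequality at fixed $(i,\alpha,\beta)$ combined with a covering of the local parameter neighbourhood and a union bound over $i$ and the grid, and the bias identities in part (ii) via the substitution $u=hv$, integration by parts, and the vanishing kernel moments of orders $1,\dots,q-1$. Your variance bookkeeping ($\Ex[(\varrho^{(2)})^2]\lesssim h^{-1}$, $\Ex[(\varrho^{(3)})^2]\lesssim h^{-3}$ rather than the squared envelopes), the choice of grid mesh $\delta_{NT}=h^{2}\log N/\sqrt{Th}$, and the verification that $\log N=o(\sqrt{Th})$ under Assumption 2(iv) are all correct, and the $\exp(-cA^{2}\log^{2}N)$ tail indeed absorbs the polynomial union-bound factors for every fixed $A>0$, which delivers the stated $o_P$ rates.

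There is, however, one concrete error: your justification of $\sup_{\alpha,\beta}\|\Ex[\varrho^{(4)}(Y_{it}-\beta'W_{it}-\alpha)W_{it}]\|=O(1)$ does not work. The envelope of $\varrho^{(4)}(u)=4k^{(2)}(u/h)h^{-3}+k^{(3)}(u/h)u\,h^{-4}$ on its support $\{|u|\le h\}$ is of order $h^{-3}$, so "envelope times the factor $h$ from the change of variables" yields only $O(h^{-2})$, not $O(1)$; your own arithmetic ($h^{-3}\cdot h$) contradicts the claimed conclusion. The $O(1)$ bound is not an envelope statement at all: one must observe that $4k^{(2)}(v)+v k^{(3)}(v)$ is the second derivative of $G(v)=v k^{(1)}(v)+2k(v)$ and integrate by parts twice against the (shifted) conditional density, each integration by parts trading a factor $h^{-1}$ for a derivative of $\fx_{it}$, which is bounded by Assumption 2(ii). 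This in turn uses the vanishing of $k$ and its first two derivatives at $\pm1$ (guaranteed when $k$ is taken to be smooth on $\mathbb{R}$ with support contained in $[-1,1]$, as in Galvao and Kato's Assumption (A6)); without those boundary conditions the expectation genuinely is of order $h^{-2}$. A related, though minor, slip is your remark that the Lipschitz constant $h^{-3}$ of $\varrho^{(3)}$ is "obtained from the last claim of part (ii)": that claim concerns the expectation of $\varrho^{(4)}$, whereas what you need for the discretization step is the sup-norm bound $\sup_u|\varrho^{(4)}(u)|\lesssim h^{-3}$, which follows directly from the definition. Once the $\varrho^{(4)}$ bound is rederived by integration by parts, the rest of your argument goes through.
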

\begin{proof}
The proof is similar to the proof of Lemma B.1 and Lemma B.2 of \cite{galvao2016smoothed} and therefore it is omitted.
\end{proof}

\noindent{\textbf{Proof of Theorem 2:}}
\begin{proof}
The first order condition (FOC) is given by:
\[      \partial \hat{\mathbb{S}}_{NT}(\hat{\beta})/\partial \beta=\AverageNT  \varrho^{(1)}(Y_{it} -\hat{\beta}'W_{it} - \hat{\alpha}_i)W_{it}= 0 .\]
Expanding the FOC around $(\beta_0,\alpha_{01},\ldots,\alpha_{0N})$ gives:
\begin{multline}\label{A8}
\left( \AverageNT  \varrho^{(2)}_{it}W_{it}W_{it}'\right)(\hat{\beta}-\beta_0)= \AverageNT  \varrho^{(1)}_{it}W_{it}
- \AverageNT  \varrho^{(2)}_{it}W_{it}(\hat{\alpha}_i - \alpha_{0i}) \\
+ 0.5 \sum_{j=1}^{d+1}\sum_{l=1}^{d+1} \left[ \left( \AverageNT  \varrho^{(3)}_{it}(\ast)W_{it}W_{it,j}W_{it,l}\right)(\hat{\beta}_j-\beta_{0,j})(\hat{\beta}_l-\beta_{0,l})\right] \\
+ 0.5\left( \AverageNT  \varrho^{(3)}_{it}(\ast) (\hat{\alpha}_i - \alpha_{0i})W_{it}W_{it}'\right)(\hat{\beta}-\beta_0)
+0.5\left( \AverageNT  \varrho^{(3)}_{it}(\ast) (\hat{\alpha}_i - \alpha_{0i})^2W_{it}\right),
\end{multline}
where $\varrho^{(3)}_{it}(\ast) =\varrho^{(3)}(Y_{it} -\beta^{\ast '}W_{it} - \alpha_i^{\ast})$, and $\beta^{\ast }$ lies between $\beta_0$ and $\hat{\beta}$, $\alpha_i^{\ast}$ lies between $\alpha_{0i}$ and $\hat{\alpha}_i$.

\noindent{\textbf{Step 1:}} We can write
\[\AverageNT  \varrho^{(2)}_{it}W_{it}(\hat{\alpha}_i - \alpha_i) = \left( \AverageN  \bar{\gamma}_{i}\bar{X}_i' \right)(\hat{\theta} -\theta_0) +\AverageN  \bar{\gamma}_{i}\bar{\epsilon}_i \]
where $\bar{\gamma}_{i} = T^{-1}\sum_{t=1}^{T} \varrho^{(2)}_{it}W_{it}$. Define $\tilde{\gamma}_{i } = \bar{\gamma}_{i}-\gamma_i$, we have
\begin{multline}\label{A9}
\AverageNT  \varrho^{(2)}_{it}W_{it}(\hat{\alpha}_i - \alpha_i)=\left( \AverageN  \bar{\gamma}_{i}\bar{X}_i' \right)(\hat{\theta} -\theta_0) + \AverageNT  \gamma_i\epsilon_{it} + \AverageN  \tilde{\gamma}_{i}\bar{\epsilon}_{i}.
\end{multline}
From Lemma 1 and 2 we have
\[\left( \AverageN  \bar{\gamma}_{i}\bar{X}_i' \right)(\hat{\theta} -\theta_0) =\left( \AverageN  \gamma_{i}\mu_i' \right)(\hat{\theta} -\theta_0) +o_P((NT)^{-1/2})=\mathbf{A} (\hat{\theta} -\theta_0)+o_P((NT)^{-1/2}). \]
Next, the last term on the RHS of \eqref{A9} can be written as
\[ \frac{1}{T} \cdot \AverageN \left( \frac{1}{\sqrt{T}}\sum_{t=1}^{T} (   \varrho^{(2)}_{it}W_{it} - \gamma_i  ) \right) \left(\frac{1}{\sqrt{T}}\sum_{t=1}^{T}\epsilon_{it} \right). \]
It can be shown that
\[ \Ex\left[ \AverageN \left( \frac{1}{\sqrt{T}}\sum_{t=1}^{T} (   \varrho^{(2)}_{it}W_{it} - \gamma_i  ) \right) \left(\frac{1}{\sqrt{T}}\sum_{t=1}^{T}\epsilon_{it} \right)      \right]
= \AverageN \Ex[ \varrho^{(2)}_{it}W_{it} \epsilon_{it}].
\]
Note that since $\epsilon_{it} = u_{it} + [\beta(\tau) -\lambda_0]'W_{it}$,
\[\Ex[ \varrho^{(2)}_{it}W_{it} \epsilon_{it}] =\Ex[ \varrho^{(2)}_{it}W_{it} u_{it} ] + \Ex[ \varrho^{(2)}_{it}W_{it}W_{it}'] (\beta_0-\theta_0).\]
Similar to the proof of Lemma 2, we can show that
\[\Ex[ \varrho^{(2)}_{it}W_{it} u_{it} ]  =o(1) \quad \text{ and } \quad \Ex[ \varrho^{(2)}_{it}W_{it}W_{it}'] = \Ex \left[ \fx_{it}(0|X_{it}) W_{it}W_{it}' \right] +O(h^q),\]
it then follows that
\[ \AverageN \Ex[ \varrho^{(2)}_{it}W_{it} \epsilon_{it}]  =\Sigma \cdot (\beta_0-\lambda_0) +o(1).\]
Further, it can be shown that
\[ \Ex\left\| \AverageN \left( \frac{1}{\sqrt{T}}\sum_{t=1}^{T} (   \varrho^{(2)}_{it}W_{it} - \gamma_i  ) \right) \left(\frac{1}{\sqrt{T}}\sum_{t=1}^{T}\epsilon_{it} \right)      \right\|^2 =o(1), \]
thus we have
\[\AverageN  \tilde{\gamma}_{i}\bar{\epsilon}_{i} = \frac{\Sigma \cdot (\beta_0-\lambda_0)}{T}+o_P(T^{-1}). \]
Combining all the above results gives:
\begin{equation}\label{A10}
\AverageNT  \varrho^{(2)}_{it}W_{it}(\hat{\alpha}_i - \alpha_i) = \mathbf{A} (\hat{\theta} -\lambda_0)+\AverageNT  \gamma_i\epsilon_{it}  + \frac{\Sigma \cdot (\beta_0-\theta_0)}{T}+o_P(T^{-1}).
\end{equation}

\noindent{\textbf{Step 2:}} Write
\[ \AverageNT  \varrho^{(3)}_{it}(\ast) (\hat{\alpha}_i - \alpha_i)^2W_{it}
= \AverageN  \left[     \left( \AverageT \varrho^{(3)}_{it}(\ast)W_{it} \right)(\hat{\alpha}_i - \alpha_i)^2  \right].
\]
By Lemma 2 and Assumption 2, we can show that
\[ \AverageT \varrho^{(3)}_{it}(\ast)W_{it} =\eta_i + \bar{o}_P\left( \frac{\log N}{\sqrt{Th^3}}\right) + \bar{O}(1) \left( |\hat{\alpha}_i - \alpha_i | +\|\hat{\beta} -\beta_0\|\right).\]
It can be show that $\max_{1\leq i \leq N}|\hat{\alpha}_i -\alpha_i |=o_P(1)$. Thus, it follows from Lemma 1 and Assumption 2 that:
\[ \AverageNT  \varrho^{(3)}_{it}(\ast) (\hat{\alpha}_i - \alpha_i)^2W_{it} = \AverageN  \eta_i (\hat{\alpha}_i - \alpha_i)^2+o_P(T^{-1}).\]
Finally, it is easy to show that
\[ \AverageN  \eta_i (\hat{\alpha}_i - \alpha_i)^2 =o_P(\| \hat{\theta}-\theta_0\|) + \AverageN \eta_i \bar{\epsilon_i}^2,\]
and that
\[ \AverageN \eta_i \bar{\epsilon_i}^2 = \frac{1}{T}\cdot\AverageN \eta_i \left(\frac{1}{\sqrt{T}}\sum_{t=1}^{T}\epsilon_{it} \right)^2
=\frac{1}{T}  \cdot \AverageN \eta_i \Ex[\epsilon_{it}^2] +o_P(T^{-1}).
\]
Combining the above results gives:
\begin{equation}\label{A11}
\AverageNT  \varrho^{(3)}_{it}(\ast) (\hat{\alpha}_i - \alpha_i)^2W_{it} = o_P(T^{-1}) + \frac{d}{T}.
\end{equation}
where $d= \lim_{N\rightarrow\infty} N^{-1}\sum_{i=1}^{N}\eta_i \Ex[\epsilon_{it}^2]$.

\noindent{\textbf{Step 3:}} It is easy to show that the third and the fourth terms on the RHS of \eqref{A8} are both $o_P(\|\hat{\beta} -\beta_0\|)$, thus it follows from \eqref{A8}, \eqref{A10} and \eqref{A11} that
\begin{multline}\label{A12}
\left( \AverageNT  \varrho^{(2)}_{it}W_{it}W_{it}'\right)(\hat{\beta}-\beta_0) = \AverageNT  \left[\varrho^{(1)}_{it}W_{it}-\gamma_i\epsilon_{it} \right]  -  \mathbf{A} (\hat{\theta} -\theta_0) \\ - \frac{\Sigma \cdot (\beta_0-\lambda_0)}{T}
+ \frac{0.5d}{T}+o_P(\|\hat{\beta} -\beta_0\|)+o_P(T^{-1}) .
\end{multline}
From Lemma 1 we have,
\[  \mathbf{A}(\hat{\theta} -\theta_0) = \AverageNT \mathbf{A}\mathbf{B}^{-1}\ddot{X}_{it}\ddot{\epsilon}_{it}+o_P(T^{-1}).\]
Thus, we can write
\begin{equation} \label{A13}
\AverageNT  \left[\varrho^{(1)}_{it}W_{it}-\gamma_i\epsilon_{it} \right] -  \mathbf{A} (\hat{\theta} -\theta_0)
=\AverageNT  \left[\varrho^{(1)}_{it}W_{it}-\gamma_i\epsilon_{it}- \mathbf{A}\mathbf{B}^{-1}\ddot{X}_{it}\ddot{\epsilon}_{it} \right] +o_P(T^{-1}).
\end{equation}
Similar to Lemma 2, we can show that
\begin{equation}\label{A14}
 \AverageNT  \varrho^{(2)}_{it}W_{it}W_{it}' =  \AverageN   \Ex \left[ \fx_{it}(0|X_{it}) W_{it}W_{it}' \right]  +O(h^q) = \Sigma+ o_P(1).
\end{equation}
Finally, the desired result follows from
\begin{equation}  \frac{1}{\sqrt{NT}} \sum_{i=1}^{N} \sum_{t=1}^{T}\left[\varrho^{(1)}_{it}W_{it}-\gamma_i\epsilon_{it}-\mathbf{A}\mathbf{B}^{-1} \ddot{X}_{it}\ddot{\epsilon}_{it} \right]
\overset{d}{\rightarrow} \mathcal{N}(0,\Omega).\end{equation}
From the proof of Lemma 1 we have $(NT)^{-1/2}  \sum_{i=1}^{N} \sum_{t=1}^{T} \ddot{X}_{it}\ddot{\epsilon}_{it}=(NT)^{-1/2}  \sum_{i=1}^{N} \sum_{t=1}^{T} \tilde{X}_{it}\epsilon_{it} +o_P(1)$. Thus, we have
\[ \frac{1}{\sqrt{NT}} \sum_{i=1}^{N} \sum_{t=1}^{T}\left[\varrho^{(1)}_{it}W_{it}-\gamma_i\epsilon_{it}-\mathbf{A}\mathbf{B}^{-1} \ddot{X}_{it}\ddot{\epsilon}_{it} \right] =\frac{1}{\sqrt{NT}} \sum_{i=1}^{N} \sum_{t=1}^{T}\left[\varrho^{(1)}_{it}W_{it}-\gamma_i\epsilon_{it}-\mathbf{A}\mathbf{B}^{-1} \tilde{X}_{it}\epsilon_{it} \right]+o_P(1). \]
Note that by Lemma 2,
\[ \Ex\left[\varrho^{(1)}_{it}W_{it}-\gamma_i\epsilon_{it}-\mathbf{A}\mathbf{B}^{-1} \tilde{X}_{it}\epsilon_{it} \right] =O(h^q).\]
Moreover,
\begin{align*}
&  \Ex\left[ \left(\varrho^{(1)}_{it}W_{it}-\gamma_i\epsilon_{it}-\mathbf{A}\mathbf{B}^{-1} \tilde{X}_{it}\epsilon_{it} \right)\left(\varrho^{(1)}_{it}W_{it}-\gamma_i\epsilon_{it}-\mathbf{A}\mathbf{B}^{-1} \tilde{X}_{it}\epsilon_{it} \right)'     \right]      \\
=& \Ex\left[  \left(\varrho^{(1)}_{it}\right)^2 W_{it}W_{it}' \right]+ \gamma_i \gamma_i' \cdot \Ex\left[\epsilon_{it}^2 \right]+\mathbf{A}\mathbf{B}^{-1} \Ex\left[ \epsilon_{it}^2 \tilde{X}_{it}\tilde{X}_{it} \right]\mathbf{B}^{-1} \mathbf{A}'-2\Ex\left[ \varrho^{(1)}_{it}W_{it} \epsilon_{it}\right]\gamma_i' \\
&-2\Ex\left[ \varrho^{(1)}_{it}W_{it}\tilde{X}_{it}'\epsilon_{it} \right]\mathbf{B}^{-1} \mathbf{A}'+2 \gamma_i\Ex\left[\tilde{X}_{it}'\epsilon_{it}^2 \right]\mathbf{B}^{-1} \mathbf{A}',
\end{align*}
and similar to Lemma 2 we can show that
\[ \Ex\left[  \left(\varrho^{(1)}_{it}\right)^2 W_{it}W_{it}' \right] =\tau(1-\tau)\cdot \Ex[  W_{it}W_{it}'] +o(1),\]
\[\Ex\left[ \varrho^{(1)}_{it}W_{it} \epsilon_{it}\right]\gamma_i' = \Ex\left[ \left(\tau - \mathbf{1}\{u_{it}\leq 0\}\right)u_{it}W_{it}      \right] \gamma_i'+o(1),\]
\begin{multline*}
\Ex\left[ \varrho^{(1)}_{it}W_{it}\tilde{X}_{it}'\epsilon_{it} \right]\mathbf{B}^{-1} \mathbf{A}'  =\Ex\left[ \left(\tau - \mathbf{1}\{u_{it}\leq 0\}\right)u_{it}W_{it}X_{it}'      \right]\mathbf{B}^{-1} \mathbf{A}'  \\- \Ex\left[ \left(\tau - \mathbf{1}\{u_{it}\leq 0\}\right)u_{it}W_{it}      \right] \mu_i'\mathbf{B}^{-1} \mathbf{A}' +o(1).  \end{multline*}
Thus, we have
\begin{equation*}
 \frac{1}{NT}\sum_{i=1}^{N} \sum_{t=1}^{T} \Ex\left[ \left(\varrho^{(1)}_{it}W_{it}-\gamma_i\epsilon_{it}-\mathbf{A}\mathbf{B}^{-1} \tilde{X}_{it}\epsilon_{it} \right)\left(\varrho^{(1)}_{it}W_{it}-\gamma_i\epsilon_{it}-\mathbf{A}\mathbf{B}^{-1} \tilde{X}_{it}\epsilon_{it} \right)'     \right]  \rightarrow \Omega,
\end{equation*}
and the desired result follows from Lyapunov's central limit theorem.
\end{proof}

\newpage
\begin{spacing}{1.2}
\bibliographystyle{chicago}
\bibliography{inter_reference}

\begin{thebibliography}{}

\bibitem[\protect\citeauthoryear{Abrevaya and Dahl}{Abrevaya and
  Dahl}{2008}]{abrevaya2008effects}
Abrevaya, J. and C.~M. Dahl (2008).
\newblock The effects of birth inputs on birthweight: evidence from quantile
  estimation on panel data.
\newblock {\em Journal of Business \& Economic Statistics\/}~{\em 26\/}(4),
  379--397.

\bibitem[\protect\citeauthoryear{Amemiya}{Amemiya}{1982}]{amemiya1982two}
Amemiya, T. (1982).
\newblock Two stage least absolute deviations estimators.
\newblock {\em Econometrica\/}, 689--711.

\bibitem[\protect\citeauthoryear{Ando and Bai}{Ando and
  Bai}{2019}]{ando2019quantile}
Ando, T. and J.~Bai (2019).
\newblock Quantile co-movement in financial markets: A panel quantile model
  with unobserved heterogeneity.
\newblock {\em Journal of the American Statistical Association\/}, 1--31.

\bibitem[\protect\citeauthoryear{Arellano and Bonhomme}{Arellano and
  Bonhomme}{2016}]{arellano2016nonlinear}
Arellano, M. and S.~Bonhomme (2016).
\newblock Nonlinear panel data estimation via quantile regressions.
\newblock {\em The Econometrics Journal\/}~{\em 19\/}(3), C61--C94.

\bibitem[\protect\citeauthoryear{Besstremyannaya and Golovan}{Besstremyannaya
  and Golovan}{2019}]{10.1093/ectj/utz012}
Besstremyannaya, G. and S.~Golovan (2019).
\newblock {Reconsideration of a simple approach to quantile regression for
  panel data}.
\newblock {\em The Econometrics Journal\/}~{\em 22\/}(3), 292--308.

\bibitem[\protect\citeauthoryear{Canay}{Canay}{2011}]{canay2011simple}
Canay, I.~A. (2011).
\newblock A simple approach to quantile regression for panel data.
\newblock {\em The Econometrics Journal\/}~{\em 14\/}(3), 368--386.

\bibitem[\protect\citeauthoryear{Chen}{Chen}{2019}]{chen2019twostep}
Chen, L. (2019).
\newblock Two-step estimation of quantile panel data models with interactive
  fixed effects.
\newblock working paper.

\bibitem[\protect\citeauthoryear{Dhaene and Jochmans}{Dhaene and
  Jochmans}{2015}]{dhaene2015split}
Dhaene, G. and K.~Jochmans (2015).
\newblock Split-panel jackknife estimation of fixed-effect models.
\newblock {\em The Review of Economic Studies\/}~{\em 82\/}(3), 991--1030.

\bibitem[\protect\citeauthoryear{Galvao}{Galvao}{2011}]{galvao2011quantile}
Galvao, A.~F. (2011).
\newblock Quantile regression for dynamic panel data with fixed effects.
\newblock {\em Journal of Econometrics\/}~{\em 164\/}(1), 142--157.

\bibitem[\protect\citeauthoryear{Galvao and Kato}{Galvao and
  Kato}{2016}]{galvao2016smoothed}
Galvao, A.~F. and K.~Kato (2016).
\newblock Smoothed quantile regression for panel data.
\newblock {\em Journal of Econometrics\/}~{\em 193\/}(1), 92--112.

\bibitem[\protect\citeauthoryear{Galvao, Lamarche, and Lima}{Galvao
  et~al.}{2013}]{galvao2013estimation}
Galvao, A.~F., C.~Lamarche, and L.~R. Lima (2013).
\newblock Estimation of censored quantile regression for panel data with fixed
  effects.
\newblock {\em Journal of the American Statistical Association\/}~{\em
  108\/}(503), 1075--1089.

\bibitem[\protect\citeauthoryear{Galvao and Montes-Rojas}{Galvao and
  Montes-Rojas}{2010}]{galvao2010penalized}
Galvao, A.~F. and G.~V. Montes-Rojas (2010).
\newblock Penalized quantile regression for dynamic panel data.
\newblock {\em Journal of Statistical Planning and Inference\/}~{\em
  140\/}(11), 3476--3497.

\bibitem[\protect\citeauthoryear{Graham, Hahn, Poirier, and Powell}{Graham
  et~al.}{2018}]{graham2018quantile}
Graham, B.~S., J.~Hahn, A.~Poirier, and J.~L. Powell (2018).
\newblock A quantile correlated random coefficients panel data model.
\newblock {\em Journal of Econometrics\/}~{\em 206\/}(2), 305--335.

\bibitem[\protect\citeauthoryear{Hahn and Newey}{Hahn and
  Newey}{2004}]{hahn2004jackknife}
Hahn, J. and W.~Newey (2004).
\newblock Jackknife and analytical bias reduction for nonlinear panel models.
\newblock {\em Econometrica\/}~{\em 72\/}(4), 1295--1319.

\bibitem[\protect\citeauthoryear{Harding and Lamarche}{Harding and
  Lamarche}{2014}]{harding2014estimating}
Harding, M. and C.~Lamarche (2014).
\newblock Estimating and testing a quantile regression model with interactive
  effects.
\newblock {\em Journal of Econometrics\/}~{\em 178}, 101--113.

\bibitem[\protect\citeauthoryear{Horowitz}{Horowitz}{1998}]{horowitz1998bootstrap}
Horowitz, J.~L. (1998).
\newblock Bootstrap methods for median regression models.
\newblock {\em Econometrica\/}, 1327--1351.

\bibitem[\protect\citeauthoryear{Kato, Galvao, and Montes-Rojas}{Kato
  et~al.}{2012}]{kato2012asymptotics}
Kato, K., A.~F. Galvao, and G.~V. Montes-Rojas (2012).
\newblock Asymptotics for panel quantile regression models with individual
  effects.
\newblock {\em Journal of Econometrics\/}~{\em 170\/}(1), 76--91.

\bibitem[\protect\citeauthoryear{Koenker}{Koenker}{2004}]{koenker2004quantile}
Koenker, R. (2004).
\newblock Quantile regression for longitudinal data.
\newblock {\em Journal of Multivariate Analysis\/}~{\em 91\/}(1), 74--89.

\bibitem[\protect\citeauthoryear{Lamarche}{Lamarche}{2010}]{lamarche2010robust}
Lamarche, C. (2010).
\newblock Robust penalized quantile regression estimation for panel data.
\newblock {\em Journal of Econometrics\/}~{\em 157\/}(2), 396--408.

\bibitem[\protect\citeauthoryear{Rosen}{Rosen}{2012}]{rosen2012set}
Rosen, A.~M. (2012).
\newblock Set identification via quantile restrictions in short panels.
\newblock {\em Journal of Econometrics\/}~{\em 166\/}(1), 127--137.

\end{thebibliography}
\end{spacing}

\end{document}